\def\beq{\begin{equation}}
\def\eeq{\end{equation}}
\def\beqa{\begin{eqnarray}}
\def\eeqa{\end{eqnarray}}
\def\beqan{\begin{eqnarray*}}
\def\eeqan{\end{eqnarray*}}
\newcommand{\citep}[1]{\cite{#1}}
\renewcommand{\Re}{\operatorname{Re}}
\renewcommand{\tilde}{\widetilde}
\renewcommand{\hat}{\widehat}
\def\beq{\begin{equation}}
\def\eeq{\end{equation}}
\def\beqa{\begin{eqnarray}}
\def\eeqa{\end{eqnarray}}
\def\beqan{\begin{eqnarray*}}
\def\eeqan{\end{eqnarray*}}
\def\C{{\mathbb{C}}}
\DeclareMathOperator{\diag}{Diag}
\newtheorem{theorem}{Theorem}
\newtheorem{proposition}{Proposition}
\newtheorem{lemma}{Lemma}
\theoremstyle{definition}
\def\Exp{\mathbb{E}}
\def\Tr{\mathrm{Tr}}
\def\tr{\mathrm{Tr}}
\newcommand\blfootnote[1]{%
  \begingroup
  \renewcommand\thefootnote{}\footnote{#1}%
  \addtocounter{footnote}{-1}%
  \endgroup
}
\newcommand{\zero}{\mathbf{0}}
\newcommand{\abf}{\mathbf{a}}
\newcommand{\bbf}{\mathbf{b}}
\newcommand{\cbf}{\mathbf{c}}
\newcommand{\ebf}{\mathbf{e}}
\newcommand{\gbf}{\mathbf{g}}
\newcommand{\hbf}{\mathbf{h}}
\newcommand{\pbf}{\mathbf{p}}
\newcommand{\rbf}{\mathbf{r}}
\newcommand{\ubf}{\mathbf{u}}
\newcommand{\ubfhat}{\widehat{\mathbf{u}}}
\newcommand{\xbf}{\mathbf{x}}
\newcommand{\ybf}{\mathbf{y}}
\newcommand{\zbf}{\mathbf{z}}
\newcommand{\zbfhat}{\widehat{\mathbf{z}}}
\newcommand{\Abf}{\mathbf{A}}
\newcommand{\Bbf}{\mathbf{B}}
\newcommand{\mubf}{\mathbf{\Mu}}
\newcommand{\lbf}{\boldsymbol \ell}
\newcommand{\Cbf}{\mathbf{C}}
\newcommand{\Fbf}{\mathbf{F}}
\newcommand{\Gbf}{\mathbf{G}}
\newcommand{\Hbf}{\mathbf{H}}
\newcommand{\Ibf}{\mathbf{I}}
\newcommand{\Jbf}{\mathbf{J}}
\newcommand{\Pbf}{\mathbf{P}}
\newcommand{\Qbf}{\mathbf{Q}}
\newcommand{\Rbf}{\mathbf{R}}
\newcommand{\Sbf}{\mathbf{S}}
\newcommand{\Tbf}{\mathbf{T}}
\newcommand{\tbf}{\mathbf{t}}
\newcommand{\Ubf}{\mathbf{U}}
\newcommand{\Vbf}{\mathbf{V}}
\newcommand{\Wbf}{\mathbf{W}}
\newcommand{\Xbf}{\mathbf{X}}
\newcommand{\Ybf}{\mathbf{Y}}
\newcommand{\Zbf}{\mathbf{Z}}
\newcommand{\Zbfhat}{\widehat{\mathbf{Z}}}
\def\nubf{{\boldsymbol \nu}}
\def\mubf{{\boldsymbol \mu}}
\def\nubf{{\boldsymbol \nu}}
\def\xibf{{\boldsymbol \xi}}
\def\Xibf{{\boldsymbol \Xi}}
\newcommand{\Phibf}{{\boldsymbol \Phi}}
\newcommand{\indic}[1]{\mathbbm{1}_{ \{ {#1} \} }}
\newcommand{\tran}{^{\text{\sf T}}}
\newcommand{\str}{^{\text{\sf *}}}
\newcommand{\herm}{^{\text{\sf H}}}
\newcommand{\bkt}[1]{{\left< #1 \right>}}
\begin{document}

\bstctlcite{IEEEexample:BSTcontrol}
\title{Quantized MIMO: Channel Capacity and Spectrospatial Power Distribution}

\author{
 \IEEEauthorblockN{
 Abbas Khalili, Elza Erkip, Sundeep Rangan} 

\IEEEauthorblockA{Dept. of Electrical and Computer Engineering, NYU Tandon School of Engineering.}
}
\maketitle

\begin{abstract}
Millimeter wave systems suffer from high power consumption and are constrained to use low resolution quantizers \textemdash  digital to analog and analog to digital converters (DACs and ADCs). However, low resolution quantization leads to reduced data rate and increased out-of-band emission noise. In this paper, a multiple-input multiple-output (MIMO) system with linear transceivers using low resolution DACs and ADCs is considered. An information-theoretic analysis of the system to model the effect of quantization on spectrospatial power distribution and capacity of the system is provided. More precisely, it is shown that the impact of quantization can be accurately described via a linear model with additive independent Gaussian noise. This model in turn leads to simple and intuitive expressions for spectrospatial power distribution of the transmitter and a lower bound on the achievable rate of the system. Furthermore, the derived model is validated through simulations and numerical evaluations, where it is shown to accurately predict both spectral and spatial power distributions. \blfootnote{ The work  supported in part by
NSF grants 1952180, 1925079, 1564142, 1547332, SRC, and the industrial affiliates of NYU WIRELESS.}
\end{abstract}

\section{Introduction}

Digital to analog and analog to digital converters (DACs and ADCs) are essential components of any digital communication system. While sub-$6$GHz systems use high resolution DACs and ADCs, millimeter wave (mmWave) rely on communication across wide bandwidths with large numbers of antennas thereby making high resolution DACs and ADCs very costly in terms of the power consumption \cite{rappaport2014millimeter,rangan2014millimeter}. Use of low resolution DACs and ADCs (typically 3-4 bits in I and Q) have been suggested as energy-efficient approaches for next-generation mmWave systems
\cite{abbas2017millimeter,zhang2018low,abdelghany2018towards,abbasISIT2018,abbas2019highsnr,abbas2019MT,Abbas2020Thr,singh2009limits,koch2013low,nossek2006capacity,orhan2015low,dutta2019case, mo2015capacity,rini2017general,mezghani2012capacity,Studer2016,jacobsson2017throughput,mollen2016uplink,mo2017hybrid}

Unlike its high resolution counterpart, low resolution quantization leads to high quantization noise reducing the achievable rate of the system
and adding out-of-band (OOB) emission noise which results in high adjacent carrier leakage ratio (ACLR). 
Therefore, it is of great importance to model this quantization noise and understand its impacts accurately. 

There is a large body of work on low resolution DACs and ADCs \cite{abbas2017millimeter,zhang2018low,abdelghany2018towards,khalili2021mimo,abbasISIT2018,abbas2019highsnr,abbas2019MT,Abbas2020Thr,singh2009limits,koch2013low,nossek2006capacity,orhan2015low,dutta2019case, mo2015capacity,rini2017general,mezghani2012capacity,Studer2016,jacobsson2017throughput,mollen2016uplink,mo2017hybrid,skrimponis2022understanding,skrimponis2020towards}.
The most common model used to approximate the effect of quantization is the additive Gaussian noise (AGN) model  \cite{gersho2012vector,fletcher2007robust}. This model has been rigorously analyzed in several works in the high rate quantization regime or under dithered quantization
\cite{gersho2012vector,marco2005validity,gray1993dithered,zamir1996LQN,derpich2008quadratic}. For low resolution mmWave systems, it is observed that the AGN and other 
Gaussian noise predictions match simulations and therefore provide an accurate model of quantization; however, the accuracy is not rigorously justified 
\cite{mo2015capacity,rini2017general,mezghani2012capacity,Studer2016,jacobsson2017throughput,mollen2016uplink,mo2017hybrid}.

In \cite{dutta2020capacity}, we considered a single-input single-output (SISO) communication system with a linear transceiver and low resolution quantization, and using information theoretic tools, proved the validity of the AGN model. In this paper,  we extend the analysis of \cite{dutta2020capacity} to multiple-input multiple-output (MIMO) communications. More precisely, we consider a MIMO system using hybrid transceivers with low resolution DACs and ADCs as illustrated in Fig.~\ref{fig:Sys_Model}. The transmitter modulates a set of transmit streams through a unitary transform $\Vbf \herm$ prior to the DACs. We model the spectrum of the transmitted signal through the transform $\rbf = \Vbf\xbf$, where $\xbf$ is the output of the transmitter. Note that if we set the unitary matrix $\Vbf$ to an FFT-matrix, we would have a MIMO-OFDM system and can capture filtering and spectral constraints which arise form practical constraint. The MIMO extension of \cite{dutta2020capacity} presented here requires new results on empirical convergence and mutual information of two random matrices. In addition, it allows us to study spatial distribution of power as argued below. A summary of our contributions are as follows:

\begin{itemize}
    \item \emph{Rigorous linear model:} 
    We analyse a certain large random limit of the system where $\Vbf \in \C^{N \times N}$ is selected uniformly from $N\times N$ unitary matrices and provide a rigorous framework for a linear model of quantization.
    This generalizes our result in \cite{dutta2020capacity} which considers only SISO systems (Sec.~\ref{sec:emplin}).

    \item \emph{Rate and spectrospatial power distribution:}
    We use the derived linear model and provide a simple asymptotically
    exact expression for per sample spectral covariance matrix of the system in Fig. \ref{fig:Sys_Model}. Such covariance matrices can be computed at both transmitter or receiver leading to spectrospatial power distributions and lower bounds on the system's capacity (Sec.~\ref{sec:lsl}). Furthermore, through simulations and numerical evaluations we show that the derived linear model accurately predicts ACLR and spatial power distribution (Sec.~\ref{sec:num_res}).
\end{itemize}
\emph{Notation}: $x$ is scalar, $\xbf$ is a column vector whose $i^{\rm th}$ element is $x_i$, and $\Xbf$ is a matrix whose $(i,j)^{\rm th}$ element is $x_{ij}$. We denote the column $i$ of $\Xbf$ with $\xbf_{:i}$ and with a slight abuse of notation, we denote the transpose of row $j$ with $\xbf_{j:}$. Also, $X$ is a random column vector and $[N]$ is the set $\{1,2,\ldots N\}$.

\section{System Model and Preliminaries}
\begin{figure*}[tp]
    \centering
    \includegraphics[width = \textwidth]{}
     \caption{System model with linear modulation, demodulation and quantization at both the transmitter and receiver. The linear modulation is modeled as a multiplication by $\Vbf\herm$ prior to quantization at the transmitter, while a spectrum analyzer and receiver employ the inverse transform $\Vbf$.}
    \label{fig:Sys_Model}
    \vspace{-0.3cm}
\end{figure*}
\subsection{Transceiver with Linear Transform}
We consider the general transceiver structure with linear transform modulation and analog precoder and combiner shown in Fig. \ref{fig:Sys_Model}. In this system, the transmitter generates $N_{st}$ streams of length $N$ denoted by matrix $\Zbf$, whose columns $\zbf_{:i} = [z_{1i}, z_{2i}, \cdots, z_{Ni}]\tran \in \mathbb{C}^{N}, i \in [N_{st}]$ are linearly modulated as $\ubf_{:i} = \Vbf \herm \zbf_{:i}$, where $\Vbf \in \mathbb{C}^{N \times N}$ is a unitary matrix. The modulated signals are converted into analog using a set of DACs $\Qbf_{\rm tx}(\cdot)$, and then passed through a linear analog precoder determined by the matrix $\Wbf_{{\rm tx}} \in \mathbb{C}^{N_t \times N_{st}}$ and transmitted. If $V$ were an FFT matrix, we could consider the symbols $z_{ni}, i \in [N_{st}], n\in[N]$ as the values of the information carrying signal in frequency domain and $u_{ni}, i \in [N_{st}], n\in[N]$ the digital values in time-domain. The modulation can thus be regarded as a simplified version of the OFDM (where we  ignore  the  cyclic  prefix). In  addition, if we zero-pad the input frequency-domain symbols $z_{ni}, i \in [N_{st}], n\in [N]$, the transformed vector $\ubf_{:i} = \Vbf \herm \zbf_{:i}$ can be seen as a linearly up-sampled version of $\zbf_{:i}$.
The transmitted signal goes through a general MIMO channel of the form
\begin{align}
\label{eq:ch}
    \Ybf = \Fbf(\Xbf, \Xibf),
\end{align}
where $\Xbf = [\xbf_{:1},\xbf_{:2}, \cdots, \xbf_{:N_t}]$, $\Ybf = [\ybf_{:1},\ybf_{:2}, \cdots, \ybf_{:N_r}]$, $\Xibf \in \mathbb{C}^{N \times N_r}$ with $\xi_{ni} \sim \mathcal{CN}(0, 1)$, and $\Fbf(\cdot)$ is a row-wise function. This channel can also model certain non-linearites in the RF front-end \cite{abdelghany2018towards}. For the case of MIMO AWGN channel, 
\begin{align}
\label{eq:awgn}
    \ybf_{n:} = \Hbf\xbf_{n:} + \xibf_{n:}, n\in[N],
\end{align}
where $\ybf_{n:}\tran, \xbf_{n:} \tran$, and $\xibf_{n:}\tran$ are the  $n^{\rm th }$ rows of matrices $\Ybf$, $\Xbf$, and $\Xibf$, respectively.
At the receiver, the channel output is first passed through a linear analog combiner  $\Wbf_{{\rm rx}}\in \mathbb{C}^{N_{sr} \times N_{r}}$, and quantized using a set of ADCs $\Qbf_{\rm rx}(\cdot)$, and then the receiver performs the inverse transform operation to obtain $\zbfhat_{:j} = \Vbf \ubfhat_{:j}, j \in [N_{sr}]$. Let $\Zbfhat = [\zbfhat_{:1},\zbfhat_{:2}, \cdots, \zbfhat_{:N_{sr}}]$ using \eqref{eq:ch}, we have
\begin{subequations}
\label{eq:sys_mod}
\begin{gather}
    \Zbfhat = \Vbf \Gbf(\Vbf \herm \Zbf, \Xibf),\\
    \Gbf(\Ubf, \Xibf) = \Qbf_{\rm rx}\left( \Fbf\left(\mathbf\Qbf_{\rm tx} \left(\Ubf\right) \Wbf_{{\rm tx}}\tran, \Xibf\right) \right)\Wbf_{{\rm rx}}\tran.
\end{gather}
\end{subequations}

\subsection{Spectral Covariance and Spectrospatial Power}
\label{subsec:ssp}
To derive per sample spectral covariance matrix at the transmitter, consider $\rbf_{:k} = \Vbf\xbf_{:k}, k\in[N_{t}]$ which is the transform of the transmitted signal from the $k^{\rm th}$ antenna.
The component $|r_{nk}|^2, n\in[N]$ can be regarded as the energy of the transmit signal from $k^{\rm th}$ antenna at frequency $n$.

We assume the frequency is divided into $M$ sub-bands.
Let $\ell_{n} \in [M]$ be the variable that indicates
which sub-band frequency $n$ of the transmitted signals (rows of matrix $\Xbf$) belong to.
We call $\lbf=(\ell_1,\ldots,\ell_{N})$ the \emph{sub-band
selection vector} and define,
\beq \label{eq:delm}
    \delta_m(\lbf) := \frac{1}{N} \sum_{n=1}^{N} \indic{\ell_n = m},
\eeq
which represents the fraction of the frequency components in
sub-band $m$.  
We will call $\delta_m$ the \emph{bandwidth fraction}
for sub-band $m$. Define $\Rbf = [\rbf_{:1},\rbf_{:2}, \cdots, \rbf_{:N_t}] = [\rbf_{1:},\rbf_{2:}, \cdots, \rbf_{N:}]\tran$. The per \!sample covariance matrix in sub-band $m$~is
\begin{align} 
\label{eq:sbfm}
    \Sbf_m(N):= \frac{1}{N}\sum_{n=1}^{N}  \rbf_{n:} \rbf_{n:}\herm \indic{l_n=m},
\end{align}

As a result, for sub-band $m$, the fraction of spectral power, $\nu_m$, and spatial (beamforming) power towards angle $\psi \in (0,2\pi]$, ${\rm BF}_m(\psi)$, respectively, are
\begin{gather}
\label{eq:spr}
    \nu_m = \frac{\tr\left(\Sbf_m(N)\right)}{\tr\left(\Sbf(N)\right)}, ~\quad \Sbf(N) = \sum_m \Sbf_m(N), ~\mathrm{and}\\
\label{eq:bfg}
     {\rm BF}_m(\psi)= \ebf(\psi)\herm \Sbf_m(N) \ebf(\psi), 
\end{gather}
where $\ebf(\psi)$ is the transmit array response. For the case of uniform linear array (ULA) with half wavelength spacing $\ebf(\psi) = [1, e^{i\pi \cos(\psi)}, \ldots, e^{(N_t-1)i\pi \cos(\psi)}]\tran$. Note that the analysis can also be applied to two dimensional beamforming.

\subsection{Achievable Rate}

An achievable rate for the system can be computed by fixing the distributions of $\zbf_{:i}, i\in [N_{st}]$ and calculating the mutual information $I(\Zbf; \Zbfhat)$ between the transmitted and received frequency-domain matrices $\Zbf$, $\Zbfhat$.
For the input distribution, we will use an independent complex Gaussian in
each frequency and stream.  Specifically, we will assume the components
$\zbf_{n:}, n\in [N]$ ( $\zbf_{n:}\tran$ is the $n^{\rm th}$ row of matrix $\Zbf$) are i.i.d.
\beq \label{eq:zmix}
    \zbf_{n:} \sim C{\mathcal N}(\zero,\Pbf_{m}), \mbox{ when }
    \ell_n = m,
\eeq
where $\Pbf_{m}$ is the covariance matrix of the components in sub-band $m$. As a result, the average per symbol (row) covariance matrix for $\Zbf$ is
\beq \label{eq:Pbar}
    \Pbf = \frac{1}{N} \Exp \left[\Zbf\tran \Zbf\str \right] = 
    \frac{1}{N} \sum_{n =1}^N \Exp\left[\zbf_{n:} \zbf_{n:}\herm \right] = \sum_{m=1}^M \delta_m\Pbf_{m},
\eeq
where $\delta_m$s are the bandwidth fractions \eqref{eq:delm}.
We note that using the Gaussian input distribution is not necessarily optimal in systems with quantization. Finding the optimal input distribution is left for future work.

\subsection{Assumptions}
\label{subsec:assump}
For tractability of the analysis, we consider a certain large system limit
of random instances of the system indexed by the dimension $N$ with $N \rightarrow \infty$. 
For each $N$, We consider $\Vbf=\Vbf(N)$ is a random unitary matrix $\Vbf$ that is uniformly distributed on the $N \times N$ unitary matrices (i.e.\ Haar distributed) instead of considering the deterministic
FFT matrix. We assume that the sub-band selection vectors  $\lbf=\lbf(N)$ are fixed sequence satisfying,
\beq \label{eq:delm_limit}
    \lim_{N \rightarrow \infty} \frac{1}{N} |\{ \ell_n(N) = m \}| = \delta_m.
\eeq
This condition ensures that
a fraction $\delta_m$ of the components are in sub-band $m$.

We consider that the channel $\Fbf(\cdot)$ is  Lipschitz continuous and acts as a \emph{row-wise separable} function. More specifically, 
\beq \label{eq:Chcomp}
    \Ybf = \Fbf(\Xbf, \Xibf) \Longleftrightarrow \ybf_{n:} = F(\xbf_{n:} , \xibf_{n:}),
\eeq
where $F(\cdot)$ is a Lipschitz vector-input, vector-output function.
Note that these conditions are satisfied for the MIMO AWGN channel in \eqref{eq:awgn} as it is performs the same operation on each row of the pair $(\Xbf,\xibf)$ and has the Lipschitz constant equal to the maximum singular value of $\Hbf$. 
Similarly, we assume that the DAC and ADC functions, $\Qbf_{\rm x}(\ubf) , {\rm x}\in\{{\rm tx, rx}\}$ are Lipschitz continuous
and \emph{component-wise separable} for some scalar-input, scalar-output function $Q_{\rm x}(\cdot),  {\rm x}\in\{{\rm tx, rx}\}$, respectively. We also  assume that these functions are deterministic and do not change over time.  We note that typical quantizers are not Lipschitz continuous. However, we assume they can be approximated arbitrarily closely by a Lipschitz function. Through simulations, we will show that the predictions hold true even
for standard quantizers.

For our proofs, we use results on empirical convergence. The analysis framework was developed by Bayati and Montanari \cite{bayati2011dynamics} and also used in the vector approximate massage passing (VAMP) analysis \cite{rangan2019vector}. In next section, we will provide a definition of empirical convergence along with the necessary result for our proofs.

\section{Empirical Convergence of Random Vectors}
\label{sec:emplin}
For a given $p\geq 1$, a mapping $\Phi:\C^{d_i}\rightarrow \C^{d_o}$ is called \emph{pseudo-Lipschitz} of order $p$ if
\begin{align*}
\|\Phi(\xbf_1)-\Phi(\xbf_2)\|\!\leq \!C\|\xbf_1-\xbf_2\|\left(1\!+\!\|\xbf_1\|^{p-1}\!+\!\|\xbf_2\|^{p-1}\right),\!
\end{align*} 
for some constant $C > 0$.  Note that when $p=1$, we obtain the standard definition of Lipschitz continuity. Now, suppose that for each $N$, $\Xbf(N)$ is a matrix 
$\Xbf(N) = [\xbf_{1:},\ldots,\xbf_{N:}]^{\tran}$ with rows $\xbf_{n:}\tran \in \C^{1\times d_i}$ for some fixed dimension $d_i$.
Let $X \in \C^{d_i}$ be a random vector. We say that the rows of $\Xbf(N)$ \emph{converge empirically to $X\tran$ with $p$-th
order moments} if
\beq \label{eq:plconv}
    \lim_{N \rightarrow \infty} \frac{1}{N} \sum_{n=1}^{N} \Phi(\xbf_{n:}(N)) = \Exp\left[ \Phi(X) \right],
\eeq
for all pseduo-Lipschitz functions of order $p$.  Loosely speaking, the condition requires that the empirical distribution of the rows of $\Xbf(N)$ converge to that of the random vector $X\tran$ which is satisfied when $\xbf_{n:}$s are i.i.d. with distribution $X$.
We will denote this by
\beq \label{eq:plp}
    \lim_{N \rightarrow \infty} \{ \xbf_{n:}\} \stackrel{PL(p)}{=} X.
\eeq
Next proposition describes the distribution of the matrices under random unitary transform and generalizes \cite[Prop.~1]{dutta2020capacity} which only considers the vector case. Let us consider a sequence of systems indexed by $N$, and for each $N$ suppose that 
$\Vbf \in \C^{N \times N}$ is uniformly distributed on the unitary matrices.
Let $\Xbf =\Xbf(N)$ be a sequence of matrices with 
\beq \label{eq:xs_gen}
    \lim_{N \rightarrow \infty} \{ \xbf_{n:} \} \stackrel{PL(2)}{=} X.
\eeq
\begin{proposition}
\label{prop:matdist}
Define $\Ubf = \Vbf \Xbf$. Given \eqref{eq:xs_gen} and Haar distributed matrix $\Vbf$, the sequence $\{\ubf_{n:}\}$ converge empirically to random vector $U \sim C\mathcal{N}(\zero, \Pbf)$, with $\Pbf = \Exp\left[ X  X \herm \right]$.
\end{proposition}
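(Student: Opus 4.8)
The plan is to reduce the statement to a standard fact about Haar-distributed unitary matrices and then lift it to pseudo-Lipschitz test functions. Write $\Xbf = \Xbf(N) \in \C^{N \times d_i}$ with rows $\xbf_{n:}\tran$, and note that since $\Vbf$ is Haar on the $N\times N$ unitaries, $\Ubf = \Vbf\Xbf$ is distributed as $\Ubf = \Vbf \Xbf$ where we may first reduce $\Xbf$ to its thin SVD: write $\Xbf = \Abf \Sigmabf \Bbf\herm$ with $\Abf \in \C^{N\times d_i}$ having orthonormal columns, $\Sigmabf, \Bbf \in \C^{d_i \times d_i}$. By the invariance of Haar measure, $\Vbf \Abf$ has the same distribution as the first $d_i$ columns of a Haar unitary matrix, i.e. a uniformly random $N \times d_i$ matrix with orthonormal columns (a "random frame" / partial isometry). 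The empirical hypothesis \eqref{eq:xs_gen} controls the Gram matrix: applying \eqref{eq:plconv} with the quadratic pseudo-Lipschitz function $\Phi(\xbf) = \xbf\xbf\herm$ (entrywise) gives $\frac1N \Xbf\herm\Xbf = \frac1N \sum_n \xbf_{n:}\xbf_{n:}\herm \to \Exp[XX\herm] = \Pbf$, and since $\frac1N\Xbf\herm\Xbf = \frac1N \Bbf\Sigmabf^2\Bbf\herm$, this says the nonzero squared singular values of $\Xbf$, suitably normalized, stabilize and $\Bbf\Sigmabf^2\Bbf\herm/N \to \Pbf$.

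Next I would identify the limiting law of the rows of $\Ubf$. Since $\ubf_{n:}\tran$ is the $n$-th row of $\Vbf\Abf\Sigmabf\Bbf\herm$, and $\Vbf\Abf \eqd \Qbf$ the first $d_i$ columns of a Haar unitary, the rows of $\Qbf$ are exchangeable, each marginally close to $C\mathcal N(\zero, \frac1N \Ibf_{d_i})$ for large $N$ (this is the classical fact that a small block of a large Haar unitary is approximately i.i.d. Gaussian — precisely, $\sqrt N$ times a fixed set of $O(1)$ coordinates of a Haar-random unit vector in $\C^N$ converges to complex Gaussian, with negligible dependence across a bounded number of rows). Hence $\ubf_{n:}\tran = \qbf_{n:}\tran \Sigmabf\Bbf\herm$ behaves like $\gbf_n \Sigmabf\Bbf\herm/\sqrt N$ with $\gbf_n$ i.i.d. $C\mathcal N(\zero,\Ibf)$, whose covariance is $\Bbf\Sigmabf^2\Bbf\herm/N \to \Pbf$; this matches $U \sim C\mathcal N(\zero,\Pbf)$. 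The remaining work is to upgrade this "marginal convergence plus asymptotic independence" into the full empirical-convergence statement \eqref{eq:plconv}: for every pseudo-Lipschitz $\Phi$ of order $2$, $\frac1N\sum_n \Phi(\ubf_{n:}) \to \Exp[\Phi(U)]$ almost surely (or in probability). The standard route is a second-moment / concentration argument: show the mean of $\frac1N\sum_n \Phi(\ubf_{n:})$ converges to $\Exp[\Phi(U)]$ using the rotational invariance and the Gaussian approximation of individual rows, and show the variance vanishes using the weak correlation between rows $\ubf_{n:}$ and $\ubf_{n':}$ of $\Vbf\Xbf$ (covariances of order $1/N$), together with the uniform integrability furnished by the $p=2$ moment bound — crucially $\frac1N\sum_n\|\ubf_{n:}\|^2 = \frac1N \tr(\Xbf\herm\Xbf) \to \tr\Pbf$ stays bounded, which tames the $(1+\|\cdot\|^{p-1})$ growth in the pseudo-Lipschitz bound.

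Concretely the step order is: (i) reduce $\Ubf$ via SVD and Haar-invariance to $\Qbf\Sigmabf\Bbf\herm$ with $\Qbf$ a uniform partial isometry; (ii) use \eqref{eq:xs_gen} with quadratic $\Phi$ to get $\frac1N\Bbf\Sigmabf^2\Bbf\herm \to \Pbf$ and $\frac1N\tr(\Sigmabf^2)\to\tr\Pbf$; (iii) invoke the Gaussianization of bounded-size blocks of Haar unitaries to couple the rows of $\Qbf$ to i.i.d. Gaussians with error controlled in, say, Wasserstein-2, paying attention that $d_i$ is \emph{fixed} while $N\to\infty$; (iv) push this coupling through the linear map $\cdot\,\Sigmabf\Bbf\herm$ and through a generic pseudo-Lipschitz $\Phi$, using step (ii) to control the operator norm / Frobenius norm of $\Sigmabf\Bbf\herm/\sqrt N$; (v) finish with a mean-plus-variance concentration to replace the empirical average by its expectation and conclude $\stackrel{PL(2)}{=}$. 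I expect the main obstacle to be step (iii)–(v): making the Gaussian coupling quantitative enough that, after multiplying by a possibly ill-conditioned $\Sigmabf\Bbf\herm$ and composing with an unbounded (quadratically growing) $\Phi$, the errors still vanish — i.e. controlling tails and near-degenerate singular values of $\Xbf$ uniformly in $N$. This is exactly where the $PL(2)$ hypothesis (rather than mere Lipschitz) is needed, and it is the part that generalizes \cite[Prop.~1]{dutta2020capacity} from vectors to matrices; one likely imports a truncation argument, proving the claim first for bounded $\Phi$ and well-conditioned $\Xbf$, then removing the truncation using the second-moment control from \eqref{eq:xs_gen}.
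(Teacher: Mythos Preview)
Your plan is sound and would work, but it follows a genuinely different route from the paper's own proof. The paper does \emph{not} use an SVD reduction, Gaussian coupling of Haar blocks, or a mean-plus-variance concentration argument. Instead it argues structurally via characteristic functions: it observes that $\Ubf=\Vbf\Xbf$ is left-spherically symmetric (any orthogonal left-multiplication preserves its law), invokes de~Finetti's theorem to conclude that the rows $\ubf_{n:}$ are conditionally i.i.d.\ given some $\sigma$-field, and then shows that the conditional characteristic function $\Phi(\tbf)$ of a single row must satisfy the Cauchy-type functional equation $\Phi(\tbf)=\Phi(\hbf)\Phi(\gbf)$ whenever $\tbf\tbf\herm=\hbf\hbf\herm+\gbf\gbf\herm$ (this comes from Lemma~\ref{lem:quad}, that $\Phi$ depends only on $\Tbf\herm\Tbf$). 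A separate functional-equation lemma (Lemma~\ref{lem:func}) forces $\Phi(\tbf)=e^{-\tbf\herm\Pbf\tbf/4}$, i.e.\ a complex Gaussian; finally $\Pbf$ is identified with $\Exp[XX\herm]$ via the empirical-convergence hypothesis on $\Xbf$ and the unitary invariance $\Ubf\tran\Ubf\str=\Xbf\tran\Xbf\str$. The argument follows Kingman's classical treatment of spherically symmetric sequences.

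In short: the paper trades your quantitative coupling-and-concentration machinery for an exchangeability/characteristic-function identity that pins down the Gaussian law directly. Your approach is closer to the toolkit actually used in the AMP/VAMP literature the paper cites (Bayati--Montanari, \cite{rangan2019vector}) and gives, in principle, more control over rates; the paper's approach is slicker in that it never needs to quantify the Gaussian approximation or run a truncation, but it leans on de~Finetti for finite $N$ in a way that is itself somewhat informal. Either line completes the proof; the hard step you flagged (pushing the coupling through an unbounded $\Phi$ and a possibly ill-conditioned $\Sigmabf\Bbf\herm$) is simply bypassed in the paper by never introducing a coupling at all.
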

\begin{proof}
See Appendix \ref{app:matdist}.   
\end{proof}
This proposition shows that random unitary transformation effectively creates a Gaussian distribution in the sense that for input matrices whose rows converge empirically, the rows of the output matrix converge empirically to a Gaussian distribution. Now, consider a matrix $\Ybf$ generated by,
\beq \label{eq:yG_gen}
    \Ybf = \Vbf \Phibf( \Vbf\herm \Xbf,\Xibf),
\eeq
where $\Phibf(\cdot)$ is some function that operates row-wise in that
\beq
\label{eq:phicon}
    \Ybf = \Phibf(\Ubf, \Xibf) \Longleftrightarrow \ybf_{n:} = \Phi(\ubf_{n:}, \xibf_{n:}),
\eeq
where $\Phi(\cdot)$ is a pseudo-Lipschitz vector-input, vector-output function.
Assume that rows of $\Xibf$ also converge empirically in that $\lim_{N \rightarrow \infty} \{ \xibf_{n:} \} \stackrel{PL(2)}{=} \Xi,$
for some random vector $\Xi$.
To analyze the statistics on $\Ybf$, we define the quantities:
\begin{subequations} \label{eq:tau_alpha_gen}
\begin{gather}
    \Abf := \Exp\left[\frac{\partial \Phi(U,\Xi)}{\partial U} \right], \\
    \Tbf := \Exp\left[(\Phi(U,\Xi)- \Abf U ) (\Phi(U,\Xi)- \Abf U  ) \herm\right].
\end{gather}
\end{subequations}
where $U \sim \mathcal{CN}(\zero, \Pbf)$, $\Pbf := \Exp\left[X X \herm\right]$, and $\nicefrac{\partial \Phi(U,\Xi)}{\partial U}$ denotes the Jacobian of $\Phi(\cdot)$ with respect to the vector $U$. To calculate the matrix $\Abf$ one can use the multivariate extension of Stein's lemma provided in \cite{landsman2008stein} i.e.,
\beq
\Pbf\Abf \herm = \Exp \left[  U\Phi(U,\Xi)\herm \right].
\eeq
\begin{proposition}  
\label{prop:lin}
Given the relation in \eqref{eq:yG_gen} and the conditions \eqref{eq:xs_gen} and \eqref{eq:phicon}, the rows of the matrix pair $(\Ybf,\Xbf)$ converge empirically as,
\beq \label{eq:lin_model_lim}
    \lim_{N \rightarrow \infty} \{(\ybf_{n:} ,\xbf_{n:}) \} \stackrel{PL(2)}{=} (Y,X),
\eeq
and 
\beq \label{eq:lin_model_gen}
    Y = \Abf X + \Theta, \quad \Theta \sim C{\mathcal N}(\zero,\Tbf),
\eeq
with $\Theta$ independent of $X$.
\end{proposition}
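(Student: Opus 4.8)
\emph{Proof plan.} The plan is to isolate the component of the Haar matrix $\Vbf$ that is independent of the nonlinearity and feed it into Proposition~\ref{prop:matdist}. Put $\Ubf := \Vbf\herm\Xbf$ and $\Wbf := \Phibf(\Ubf,\Xibf)$, so that $\Xbf = \Vbf\Ubf$ and, by \eqref{eq:yG_gen}, $\Ybf = \Vbf\Wbf$. Since $\Vbf\herm$ is again Haar distributed, Proposition~\ref{prop:matdist} (in a form that also carries along the fixed sequence $\Xibf$, which stays independent of the new Gaussian limit) gives $\lim_{N\to\infty}\{(\ubf_{n:},\xibf_{n:})\}\PLeq(U,\Xi)$ with $U\sim C\mathcal{N}(\zero,\Pbf)$ independent of $\Xi$; since $\Phi$ is pseudo-Lipschitz, composing with test functions shows the rows of $(\Wbf,\Ubf)$ converge to $(\Phi(U,\Xi),U)$.

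Next I would split $\Wbf$ along $\Range(\Ubf)$: write $\Wbf=\Ubf\Cbf+\Wbf^{\perp}$ with $\Cbf:=(\Ubf\herm\Ubf)^{-1}\Ubf\herm\Wbf$ and the columns of $\Wbf^{\perp}$ orthogonal to those of $\Ubf$. Because $\tfrac1N\Ubf\herm\Ubf$ and $\tfrac1N\Ubf\herm\Wbf$ converge to the corresponding population cross-moments, the multivariate Stein identity $\Pbf\Abf\herm=\Exp[U\Phi(U,\Xi)\herm]$ of \eqref{eq:tau_alpha_gen} yields $\Cbf\tran\to\Abf$; hence the rows of $\Ubf\Cbf$ converge to $\Abf U$ and the rows of $\Wbf^{\perp}$ to $\Phi(U,\Xi)-\Abf U$, whose covariance is $\Tbf$ by definition. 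Multiplying by $\Vbf$ and using $\Vbf\Ubf=\Xbf$ gives $\Ybf=\Xbf\Cbf+\Vbf\Wbf^{\perp}$, i.e.\ $\ybf_{n:}=\Cbf\tran\xbf_{n:}+(\Vbf\Wbf^{\perp})_{n:}$ with $\Cbf\tran\to\Abf$ and $\{\xbf_{n:}\}\PLeq X$, so the whole statement reduces to identifying the joint empirical law of the rows of $(\Xbf,\Vbf\Wbf^{\perp})$.

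This reduction is where the main obstacle sits: $\Vbf$ and $\Wbf$ are \emph{not} independent (a naive application of Proposition~\ref{prop:matdist} to $\Ybf=\Vbf\Wbf$ would, absurdly, force the fixed rows of $\Xbf=\Vbf\Ubf$ to be Gaussian), so this dependence must be quarantined. I would condition on $\Ubf$: then $\Wbf^{\perp}$ is deterministic with columns in $\Range(\Ubf)^{\perp}$, while the conditional law of $\Vbf$ is Haar subject to $\Vbf\Ubf=\Xbf$, which means that on $\Range(\Ubf)^{\perp}$ the map $\Vbf$ is a Haar-random isometry onto the \emph{fixed} subspace $\Range(\Xbf)^{\perp}$, independent of $\Ubf$ and hence of $\Wbf^{\perp}$. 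Writing this isometry through fixed orthonormal bases, $\Vbf\Wbf^{\perp}$ has the same law as $\mathbf{O}_{\Xbf}\,\Gbf\,\widetilde{\Wbf}$, where $\mathbf{O}_{\Xbf}$ is a fixed isometry onto $\Range(\Xbf)^{\perp}$, $\Gbf$ is an $(N-d_i)\times(N-d_i)$ Haar unitary independent of $\widetilde{\Wbf}$, and $\widetilde{\Wbf}$ carries the same limiting second-order statistics as $\Wbf^{\perp}$, namely $\Tbf$. Because $d_i$ is fixed while $N\to\infty$, the subspace $\Range(\Xbf)^{\perp}$ is asymptotically all of $\C^{N}$ and the residual constraint $\Xbf\herm(\Vbf\Wbf^{\perp})=\zero$ becomes asymptotically vacuous; applying the Gaussianization mechanism behind Proposition~\ref{prop:matdist} to $\Gbf\widetilde{\Wbf}$ then lets me conclude that the rows of $(\Xbf,\Vbf\Wbf^{\perp})$ converge empirically to $(X,\Theta)$ with $\Theta\sim C\mathcal{N}(\zero,\Tbf)$ independent of $X$. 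Combined with $\Cbf\tran\to\Abf$, this is exactly \eqref{eq:lin_model_lim}--\eqref{eq:lin_model_gen}.

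The routine parts are the joint-with-$\Xibf$ strengthening of Proposition~\ref{prop:matdist}, the convergence of the Gram matrices and the resulting identification $\Cbf\tran\to\Abf$, and the substitution of $\Abf$ for $\Cbf\tran$ inside pseudo-Lipschitz averages (controlled by the uniform second-moment bounds implied by empirical convergence). The genuinely delicate points are the two ``vanishing-codimension'' versions of Proposition~\ref{prop:matdist} used above: that a Haar-random isometry of a fixed matrix into the codimension-$d_i$ subspace $\Range(\Xbf)^{\perp}$ still Gaussianizes its rows with the limiting covariance $\Tbf$, and that the resulting Gaussian limit is jointly \emph{independent} of the empirical distribution of the rows of $\Xbf$.
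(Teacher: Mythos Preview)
Your proposal is correct and follows essentially the same route as the paper: both decompose $\Phibf(\Ubf,\Xibf)$ into a part linear in $\Ubf$ plus a residual, then argue that $\Vbf$ applied to the residual is asymptotically Gaussian and independent of $\Xbf$. The paper outsources this last step entirely to \cite[Theorem~4]{rangan2019vector} (the VAMP state-evolution result), whereas you spell out the underlying Bolthausen-type conditioning mechanism --- conditioning on $\Ubf$ so that $\Vbf$ restricted to $\Range(\Ubf)^{\perp}$ is a fresh Haar isometry onto the fixed subspace $\Range(\Xbf)^{\perp}$ of vanishing codimension.

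Two minor differences are worth noting. First, the paper subtracts $\Ubf\Abf\tran$ with $\Abf$ the \emph{empirical} average Jacobian $\bkt{\partial\Phibf/\partial U}$, whereas you subtract the exact least-squares projection $\Ubf\Cbf$; your choice places $\Wbf^{\perp}$ exactly in $\Range(\Ubf)^{\perp}$, which is cleaner for the conditioning step, and $\Cbf\tran\to\Abf$ follows from Stein as you indicate. Second, the ``vanishing-codimension'' Gaussianization and the joint independence with $\Xbf$ that you correctly flag as the delicate points are precisely the content of the VAMP reference the paper invokes; so your plan is best read as a more self-contained unpacking of the same argument rather than a genuinely different one.
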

\begin{proof}
See Appendix \ref{pr:emp_conv}.
\end{proof}

The model \eqref{eq:lin_model_gen} shows that transformation of $\Xbf$ to produce $\Ybf$ results in a linearly scaled $\Xbf$ plus Gaussian noise. The scaling matrix $\Abf$ and Gaussian noise covariance matrix $\Tbf$ can be computed from the distributions of the components as shown in \eqref{eq:tau_alpha_gen}. By substituting the quantization function $Q_{\rm tx}(\cdot)$ in the place of the function $\Phi(\cdot)$,
we observe that the time-domain quantization effectively scales the frequency signal $\zbf$ and adds an independent Gaussian noise. 

\section{Achievable Spectral Energy and Rate} 
\label{sec:lsl}

\subsection{Spectral Covariance}
We first compute the asymptotic per sample spectral covariance matrix shown in \eqref{eq:sbfm}. Given the assumptions in Sec.~\ref{subsec:assump}, calculate $\Abf_{\rm tx}$ and $\Tbf_{\rm tx}$ using \eqref{eq:tau_alpha_gen} with $\Phi(U) = \Wbf_{\rm tx}Q_{\rm tx}(U)$, where $U \sim C\mathcal{N}(\zero, \Pbf),  ~ \Pbf = \sum_{m}\delta_m \Pbf_m$.

\begin{theorem}  
\label{thm:spec_fd}
Let $\Rbf=\Vbf\Xbf$ be the frequency-domain representation of the transmitted signal $\Xbf$.
Then the covariance matrix in each sub-band converges almost surely to
\begin{align} 
    \Sbf_m := \lim_{N \rightarrow \infty} \Sbf_m(N)
    =\delta_m\left[  \Abf_{\rm tx}\Pbf_m\Abf_{\rm tx} \herm + \Tbf_{\rm tx} \right].
        \label{eq:sm_fd}
\end{align}
In particular, the total covariance matrix per sample converges almost
surely as,
\begin{align} 
    \Sbf &:= 
    \lim_{N \rightarrow \infty} \frac{1}{N} \Xbf\tran \Xbf \str= \Abf_{\rm tx}\Pbf\Abf_{\rm tx} \herm + \Tbf_{\rm tx}
        \label{eq:stot_fd}
\end{align}
\end{theorem}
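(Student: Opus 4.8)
The plan is to recognize $\Rbf=\Vbf\Xbf$ as an instance of the transformed-matrix model \eqref{eq:yG_gen} and run it through the results of Section~\ref{sec:emplin}, handling the sub-band restriction separately. Since the DACs introduce no channel noise, $\xbf_{n:}=\Wbf_{\rm tx}Q_{\rm tx}(\ubf_{n:})=:\Phi(\ubf_{n:})$ where $\ubf_{n:}$ is the $n$-th row of $\Ubf=\Vbf\herm\Zbf$, so $\Rbf=\Vbf\Phibf(\Vbf\herm\Zbf)$. Grouping the rows of $\Zbf$ by sub-band, those with $\ell_n=m$ are i.i.d.\ $C{\mathcal N}(\zero,\Pbf_m)$, so the strong law together with \eqref{eq:delm_limit} gives $\lim_{N\to\infty}\{\zbf_{n:}\}\PLeq Z$ with $Z$ the mixture law $\sum_m\delta_m\,C{\mathcal N}(\zero,\Pbf_m)$; in particular $\Exp[ZZ\herm]=\sum_m\delta_m\Pbf_m=\Pbf$. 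Applying Proposition~\ref{prop:matdist} to $\Vbf\herm$ (again Haar) then yields $\lim_{N\to\infty}\{\ubf_{n:}\}\PLeq U\sim C{\mathcal N}(\zero,\Pbf)$, hence $\lim_{N\to\infty}\{\xbf_{n:}\}\PLeq\Phi(U)$ (composition with the Lipschitz $\Phi$ preserves $PL(2)$-convergence), and $\Abf_{\rm tx},\Tbf_{\rm tx}$ in \eqref{eq:tau_alpha_gen} are well defined. Writing $\Phi^\perp(u):=\Phi(u)-\Abf_{\rm tx}u$, the multivariate Stein identity gives the orthogonality $\Exp[U\Phi^\perp(U)\herm]=\zero$, and one has the \emph{exact} algebraic identity $\Rbf=\Zbf\Abf_{\rm tx}\tran+\Vbf\Phibf^\perp(\Ubf)$, which is the backbone of the rest.

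I would first dispose of the total covariance \eqref{eq:stot_fd}. Since $\Vbf$ is unitary, $\tfrac1N\Xbf\tran\Xbf\str=\tfrac1N\Rbf\tran\Rbf\str=\tfrac1N\sum_{n=1}^{N}\xbf_{n:}\xbf_{n:}\herm$, and $\xbf\mapsto\xbf\xbf\herm$ is pseudo-Lipschitz of order $2$, so by the (almost sure) convergence $\{\xbf_{n:}\}\PLeq\Phi(U)$ this limit is $\Exp[\Phi(U)\Phi(U)\herm]$. Inserting $\Phi(U)=\Abf_{\rm tx}U+\Phi^\perp(U)$ and using $\Exp[U\Phi^\perp(U)\herm]=\zero$ to kill the two cross terms gives $\Exp[\Phi(U)\Phi(U)\herm]=\Abf_{\rm tx}\Pbf\Abf_{\rm tx}\herm+\Tbf_{\rm tx}$, which is \eqref{eq:stot_fd} and also equals $\sum_m$ of the right-hand side of \eqref{eq:sm_fd}, a useful consistency check.

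For \eqref{eq:sm_fd} I would write $\Sbf_m(N)=\tfrac1N\Rbf\tran\Dbf_m\Rbf\str$ with the deterministic selector $\Dbf_m:=\diag(\indic{\ell_1=m},\ldots,\indic{\ell_N=m})$, and expand with $\Rbf=\Zbf\Abf_{\rm tx}\tran+\Vbf\Phibf^\perp(\Ubf)$ into four terms. The ``signal'' term $\Abf_{\rm tx}\bigl(\tfrac1N\sum_{n:\ell_n=m}\zbf_{n:}\zbf_{n:}\herm\bigr)\Abf_{\rm tx}\herm$ tends a.s.\ to $\delta_m\Abf_{\rm tx}\Pbf_m\Abf_{\rm tx}\herm$ by the strong law inside sub-band $m$. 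The ``noise'' term, after cancelling one $\Vbf$, is $\tfrac1N\Phibf^\perp(\Ubf)\tran\,\overline{\Vbf\herm\Dbf_m\Vbf}\,\overline{\Phibf^\perp(\Ubf)}$; here the point is that for Haar $\Vbf$ one has $\Exp_{\Vbf}\bigl[\Vbf\herm\Dbf_m\Vbf\bigr]=\tfrac1N|\{\ell_n=m\}|\,\Ibf\to\delta_m\Ibf$, so provided the fluctuations of $\Vbf\herm\Dbf_m\Vbf$ about $\delta_m\Ibf$ are controlled, this term converges to $\delta_m\Exp[\Phi^\perp(U)\Phi^\perp(U)\herm]=\delta_m\Tbf_{\rm tx}$. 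The two cross terms, after substituting $\Zbf=\Vbf\Ubf$, both take the form $\tfrac1N\Abf_{\rm tx}\Ubf\tran\,\overline{\Vbf\herm\Dbf_m\Vbf}\,\overline{\Phibf^\perp(\Ubf)}$ and should vanish by the Stein orthogonality $\Exp[U\Phi^\perp(U)\herm]=\zero$. Summing the four limits yields \eqref{eq:sm_fd}.

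The hard part is exactly this last step: because $\Xbf$, and hence $\Phibf^\perp(\Ubf)$, is built from the \emph{same} Haar matrix $\Vbf$ that appears in the conjugation $\Vbf\herm\Dbf_m\Vbf$, the sub-band claim is not a black-box corollary of Proposition~\ref{prop:lin}, and one has to re-run the Haar-measure concentration estimates behind Section~\ref{sec:emplin} with the fixed diagonal selector $\Dbf_m$ in place, showing that the deviation of $\Vbf\herm\Dbf_m\Vbf$ from $\delta_m\Ibf$ is small in the (weak) sense needed to survive contraction against the $\Vbf$-dependent rows of $\Phibf^\perp(\Ubf)$ and of $\Ubf$. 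Once that estimate is available, upgrading convergence in probability to almost sure is routine, via exponential concentration of Lipschitz functionals on the unitary group and a Borel--Cantelli argument in $N$, together with the strong law already used for the signal term.
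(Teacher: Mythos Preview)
Your strategy is sound and would ultimately work, but you take a substantially harder route than the paper and, in doing so, talk yourself into the belief that ``the sub-band claim is not a black-box corollary of Proposition~\ref{prop:lin}.'' In the paper's hands it \emph{is}, because the sub-band label $\ell_n$ is carried along as part of the input data from the start. Concretely, the paper first records that, by \eqref{eq:delm_limit}, \eqref{eq:zmix} and the strong law within each sub-band, the \emph{pairs} $(\zbf_{n:},\ell_n)$ converge empirically to $(Z,\ell)$ with $\Pr(\ell=m)=\delta_m$ and $Z\mid\ell=m\sim C{\mathcal N}(\zero,\Pbf_m)$. Proposition~\ref{prop:lin} is then applied with this augmented input (the row-wise map simply ignores the $\ell$-coordinate, so its Jacobian in that direction is zero), yielding joint empirical convergence $\{(\rbf_{n:},\zbf_{n:},\ell_n)\}\PLeq(R,Z,\ell)$ with $R=\Abf_{\rm tx}Z+W_{\rm tx}$, $W_{\rm tx}\sim C{\mathcal N}(\zero,\Tbf_{\rm tx})$ independent of $(Z,\ell)$. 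Since $(\rbf,\ell)\mapsto\rbf\rbf\herm\indic{\ell=m}$ is a pseudo-Lipschitz test function of order $2$ once $\ell$ is encoded, \eqref{eq:sm_fd} drops out in one line as $\Sbf_m=\Exp[RR\herm\indic{\ell=m}]=\delta_m\bigl(\Abf_{\rm tx}\Pbf_m\Abf_{\rm tx}\herm+\Tbf_{\rm tx}\bigr)$, and \eqref{eq:stot_fd} follows by summing over $m$.

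Your four-term expansion via $\Sbf_m(N)=\tfrac1N\Rbf\tran\Dbf_m\Rbf\str$ and the conjugated selector $\Vbf\herm\Dbf_m\Vbf$ is a legitimate alternative, and the dependence you flag between $\Vbf\herm\Dbf_m\Vbf$ and $\Phibf^\perp(\Ubf)$ is a genuine obstruction \emph{in that framing}: resolving it really would force you to reopen the Haar-concentration machinery behind Proposition~\ref{prop:lin}. But this obstacle is self-inflicted. By keeping $\ell_n$ on the input side of the proposition, the paper ensures $\Vbf$ never interacts with the sub-band label, and the indicator is applied only \emph{after} the linear model $R=\Abf_{\rm tx}Z+W_{\rm tx}$ has been established---so no new estimate is needed at all. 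Your handling of the total covariance \eqref{eq:stot_fd} via unitary invariance and $PL(2)$-convergence of $\xbf\mapsto\xbf\xbf\herm$ is correct and essentially matches the paper's derivation.
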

\begin{proof}
See Appendix \ref{sec:spec_proof}. 
\end{proof}
Based on Thm.~\ref{thm:spec_fd}, spectrospatial power distribution of the transmitter can be computed using \eqref{eq:spr} and \eqref{eq:bfg}. Based on \eqref{eq:spr}, the fraction of power in sub-band $m$ is
\beq \label{eq:nu_fd}
    \nu_m := \frac{\tr(\Sbf_m)}{\tr(\Sbf)} = 
        \frac{\delta_m\tr\left(  \Abf_{\rm tx}\Pbf_m\Abf_{\rm tx} \herm + \Tbf_{\rm tx} \right)}{\tr \left( \Abf_{\rm tx}\Pbf\Abf_{\rm tx} \herm + \Tbf_{\rm tx} \right)}.
\eeq
As a result, we always have
\beq \label{eq:nu_fd_feas}
    \nu_m \geq 
        \frac{\delta_m\tr\left(\Tbf_{\rm tx} \right)}{\tr \left( \Abf_{\rm tx}\Pbf\Abf_{\rm tx} \herm + \Tbf_{\rm tx} \right)}.
\eeq
In fact, the converse is also true. More precisely, in the next proposition we show that for a given pair $(\Abf_{\rm tx}, \Tbf_{\rm tx})$ and input covariance matrix $\Pbf$, there exist per sub-band covariance matrices $\Pbf_m$ resulting in a power fraction vector $\nubf = (\nu_1,\ldots,\nu_M)$ 
if and only if $\nu_m \geq 0$, $\sum_m \nu_m = 1$, and \eqref{eq:nu_fd_feas} is satisfied. 

\begin{proposition} \label{prop:nu_feas}
Let $\Abf_{\rm tx} \in \mathbb{C}^{N_t\times N_{st}}$, $\tr(\Tbf_{\rm tx}) >  0$,
$\Pbf$ with $\diag(\Pbf)>0$ and
$\delta_m \geq 0$ with $\sum_m \delta_m = 1$ be given.
For any $\nubf=(\nu_1,\ldots,\nu_M)$, 
the following are equivalent:
\begin{enumerate}[(a)]
\item For a set of $\nu_m$s given by \eqref{eq:nu_fd}, there exists $\Pbf_m $ with $\diag(\Pbf_m)\geq 0$ such that 
$\Pbf = \sum_m \delta_m \Pbf_m$.
\item $\nu_m$ satisfies \eqref{eq:nu_fd_feas} for all $m$
and $\sum_m \nu_m = 1$.
\end{enumerate}
\end{proposition}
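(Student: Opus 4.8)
The plan is to exploit the fact that each $\nu_m$ depends on the per-sub-band matrix $\Pbf_m$ only through the single scalar $\tr(\Abf_{\rm tx}\Pbf_m\Abf_{\rm tx}\herm)$, so the whole question reduces to a one-dimensional splitting problem. Abbreviate $D := \tr(\Abf_{\rm tx}\Pbf\Abf_{\rm tx}\herm + \Tbf_{\rm tx})$ and $\Gbf := \Abf_{\rm tx}\herm\Abf_{\rm tx}$, so that $\tr(\Abf_{\rm tx}\Pbf_m\Abf_{\rm tx}\herm) = \tr(\Pbf_m\Gbf)$, and both are $\geq 0$ whenever $\Pbf_m$ is a (positive semidefinite) covariance matrix. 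Since $\Tbf_{\rm tx}$ is a noise covariance with $\tr(\Tbf_{\rm tx})>0$ and $\Pbf$ is the average input covariance \eqref{eq:Pbar} (hence $\Pbf\succeq 0$), we have $D>0$, so \eqref{eq:nu_fd} and \eqref{eq:nu_fd_feas} are well posed. I will also assume $\delta_m>0$ for all $m$ without loss of generality: if $\delta_m=0$ then that sub-band contributes nothing to $\Pbf=\sum_m\delta_m\Pbf_m$ and \eqref{eq:nu_fd} forces $\nu_m=0$, so such indices can be removed up front. With this normalization \eqref{eq:nu_fd} is equivalent to the relation $\tr(\Pbf_m\Gbf) = \nu_m D/\delta_m - \tr(\Tbf_{\rm tx})$, which is the identity I will use in both directions.

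For the implication (a) $\Rightarrow$ (b): sum \eqref{eq:nu_fd} over $m$; using $\sum_m\delta_m\Pbf_m=\Pbf$ and $\sum_m\delta_m=1$ collapses the numerator to $\tr(\Abf_{\rm tx}\Pbf\Abf_{\rm tx}\herm)+\tr(\Tbf_{\rm tx})=D$, giving $\sum_m\nu_m=1$. For the lower bound \eqref{eq:nu_fd_feas}, observe that $\tr(\Abf_{\rm tx}\Pbf_m\Abf_{\rm tx}\herm)=\tr(\Pbf_m\Gbf)\geq 0$, being the trace of a product of two positive semidefinite matrices; substituting this nonnegativity into \eqref{eq:nu_fd} yields exactly $\nu_m\geq \delta_m\tr(\Tbf_{\rm tx})/D$.

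For (b) $\Rightarrow$ (a): define $t_m := \nu_m D/\delta_m - \tr(\Tbf_{\rm tx})$, which is $\geq 0$ precisely by \eqref{eq:nu_fd_feas}, and check the consistency identity $\sum_m\delta_m t_m = D\sum_m\nu_m-\tr(\Tbf_{\rm tx})=D-\tr(\Tbf_{\rm tx})=\tr(\Pbf\Gbf)$. If $\tr(\Pbf\Gbf)>0$, take the collinear ansatz $\Pbf_m := \bigl(t_m/\tr(\Pbf\Gbf)\bigr)\Pbf$: each $\Pbf_m$ is a nonnegative multiple of $\Pbf$, hence positive semidefinite with $\diag(\Pbf_m)\geq 0$; it satisfies $\sum_m\delta_m\Pbf_m=\bigl(\sum_m\delta_m t_m/\tr(\Pbf\Gbf)\bigr)\Pbf=\Pbf$; and $\tr(\Abf_{\rm tx}\Pbf_m\Abf_{\rm tx}\herm)=t_m$, so plugging back into \eqref{eq:nu_fd} reproduces the prescribed $\nu_m$. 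If $\tr(\Pbf\Gbf)=0$, the consistency identity (a sum of nonnegative terms $\delta_m t_m$ vanishing with $\delta_m>0$) forces $t_m=0$ for all $m$, and the choice $\Pbf_m:=\Pbf$ works since then $\tr(\Abf_{\rm tx}\Pbf_m\Abf_{\rm tx}\herm)=\tr(\Pbf\Gbf)=0=t_m$.

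The computation is essentially bookkeeping; the only genuine decision is the witness in (b) $\Rightarrow$ (a), and this is where the naive approach stumbles. One is tempted to write $\Pbf_m=\Pbf+\Dbf_m$ with $\sum_m\delta_m\Dbf_m=0$ and $\tr(\Dbf_m\Gbf)$ prescribed, but $\Pbf+\Dbf_m$ can fail to be positive semidefinite once the required $\Gbf$-traces are spread far apart, so the perturbation directions would have to be controlled carefully; taking $\Pbf_m$ collinear with $\Pbf$ sidesteps the feasibility issue entirely, at the cost of a less "generic" witness. The other point to be careful about is the reading of the diagonal conditions: the implication (a) $\Rightarrow$ (b) really does use $\tr(\Abf_{\rm tx}\Pbf_m\Abf_{\rm tx}\herm)\geq 0$, which holds because the $\Pbf_m$ are bona fide covariance matrices (positive semidefinite), not merely Hermitian with nonnegative diagonal, and this should be made explicit; the degenerate cases $\delta_m=0$ and $\tr(\Pbf\Gbf)=0$ also need the small separate arguments given above.
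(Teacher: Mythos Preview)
Your proof is correct and follows essentially the same route as the paper: the collinear witness $\Pbf_m = \bigl(t_m/\tr(\Abf_{\rm tx}\Pbf\Abf_{\rm tx}\herm)\bigr)\,\Pbf$ for (b)$\Rightarrow$(a) is exactly the construction the paper uses, and the (a)$\Rightarrow$(b) direction is the same summation and nonnegativity argument. You are, if anything, more careful than the paper in handling the degenerate cases $\delta_m=0$ and $\tr(\Abf_{\rm tx}\Pbf\Abf_{\rm tx}\herm)=0$, and in flagging that $\tr(\Abf_{\rm tx}\Pbf_m\Abf_{\rm tx}\herm)\geq 0$ really requires $\Pbf_m\succeq 0$ rather than the literal hypothesis $\diag(\Pbf_m)\geq 0$.
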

\begin{proof}
See Appendix~\ref{sec:linear_feas}.
\end{proof}
Note that \eqref{eq:nu_fd_feas} shows 
that the power in a sub-band cannot be reduced less than a threshold. 
This arises from the fact that the
quantization noise of a quantizer is white and places power across the
spectrum. Similar observation was made for the case of SISO systems in \cite{dutta2020capacity}. This suggests that, depending on the system parameters such as DAC resolution, there is a limit on how much OOB emission can be reduced. This is of great importance in practical systems with regulations on the OOB emission levels.

\subsection{Achievable Rate}
We next lower bound the achievable rate of the system
\begin{align} \label{eq:lin_rate_def}
    R_{\rm lin} := \liminf_{N \rightarrow \infty} \frac{1}{N} I(\Zbf;\Zbfhat),
\end{align}
between the transmitted symbols $\Zbf$ and received symbols $\Zbfhat$. We refer to this as the \emph{linear rate} since it is achieved using linear modulation as in Fig.~\ref{fig:Sys_Model}.
To bound \eqref{eq:lin_rate_def}, we use the following result which is a non-trivial extension of \cite[Lemma~1]{dutta2020capacity} derived for the vector case. 
\begin{lemma} Suppose that $\Zbf \in \C^{N_{st}\times d}$ is a random matrix 
with i.i.d. columns $Z \sim  C{\mathcal N}(\mubf_{\zbf},
\Pbf)$.  Let $\Ybf$ be another random matrix and define,
\begin{subequations}
\vspace{-0.3cm}
\begin{gather}
    \Pbf_C := \Pbf^{-1}\frac{1}{d}\sum_{ i=1}^d \Cbf_i, \\ 
    \Cbf_i = \Exp\left[\overline{\zbf}_{:i} \overline{\ybf}_{:i}\herm \right] \Exp\left[\overline{\ybf}_{:i} \overline{\ybf}_{:i}\herm \right]^{-1} \Exp\left[\overline{\ybf}_{:i} \overline{\zbf}_{:i}\herm \right],
\end{gather}
\end{subequations}
where $\overline{\Zbf} = \Zbf-\Exp[\Zbf]$ and $\overline{\Ybf} = \Ybf-\Exp[\Ybf]$. Then, the mutual information between $\Zbf$ and $\Ybf$ is bounded below by,
\[
    I(\Zbf;\Ybf) \geq  -d \ln(|\Ibf- \Pbf_C|).
\]
\end{lemma}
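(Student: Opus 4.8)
The plan is to start from the identity $I(\Zbf;\Ybf)=h(\Zbf)-h(\Zbf\mid\Ybf)$, evaluate the first term exactly, and bound the second from above using a \emph{per-column} affine MMSE estimate together with the maximum-entropy property of the circularly-symmetric complex Gaussian; the new element relative to the vector version in \cite{dutta2020capacity} is a final convexity step that aggregates the $d$ columns. Since the columns of $\Zbf$ are independent with $\zbf_{:i}\sim C\mathcal{N}(\mubf_{\zbf},\Pbf)$, the first term is $h(\Zbf)=\sum_{i=1}^d h(\zbf_{:i})=d\ln\big((\pi e)^{N_{st}}|\Pbf|\big)$, where $h$ is the complex differential entropy; note that independence (not just identical distribution) of the columns is what makes this an equality.

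For the conditional term, fix $i$ and let $\hat{\zbf}_{:i}:=\Exp[\zbf_{:i}]+\Exp[\overline{\zbf}_{:i}\overline{\ybf}_{:i}\herm]\,\Exp[\overline{\ybf}_{:i}\overline{\ybf}_{:i}\herm]^{-1}(\ybf_{:i}-\Exp[\ybf_{:i}])$ be the affine LMMSE estimate of $\zbf_{:i}$ from $\ybf_{:i}$ (assuming $\Exp[\overline{\ybf}_{:i}\overline{\ybf}_{:i}\herm]\succ\zero$), with residual $\ebf_{:i}:=\zbf_{:i}-\hat{\zbf}_{:i}$. Applying the chain rule, the fact that conditioning cannot increase differential entropy (used both to discard the remaining columns of $\Zbf$ and to replace $\Ybf$ by its $i$-th column), and translation invariance of differential entropy, I obtain
\begin{align*}
    h(\Zbf\mid\Ybf) &\le \sum_{i=1}^d h(\zbf_{:i}\mid\Ybf) \le \sum_{i=1}^d h(\zbf_{:i}\mid\ybf_{:i})\\
    &= \sum_{i=1}^d h(\ebf_{:i}\mid\ybf_{:i}) \le \sum_{i=1}^d h(\ebf_{:i}).
\end{align*}
The residual is zero-mean and, by the standard orthogonality/Schur-complement identity for LMMSE, $\Exp[\ebf_{:i}\ebf_{:i}\herm]=\Pbf-\Cbf_i$; hence the maximum-entropy bound gives $h(\ebf_{:i})\le\ln\big((\pi e)^{N_{st}}|\Pbf-\Cbf_i|\big)$. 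Substituting both terms back into the entropy identity, the $(\pi e)^{N_{st}}$ factors cancel and
\begin{align*}
    I(\Zbf;\Ybf) &\ge d\ln|\Pbf|-\sum_{i=1}^d\ln|\Pbf-\Cbf_i|\\
    &= -\sum_{i=1}^d\ln\big|\Ibf-\Pbf^{-1}\Cbf_i\big|.
\end{align*}

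Finally I would convert the sum of log-determinants into a single one. For each $i$, $|\Ibf-\Pbf^{-1}\Cbf_i|=|\Pbf^{-1/2}(\Pbf-\Cbf_i)\Pbf^{-1/2}|$ because the two matrices are similar, and $\Pbf^{-1/2}(\Pbf-\Cbf_i)\Pbf^{-1/2}$ is Hermitian positive semidefinite. Since $M\mapsto\ln|M|$ is concave on the Hermitian positive semidefinite matrices, Jensen's inequality gives
\begin{align*}
    \frac1d\sum_{i=1}^d\ln\big|\Pbf^{-1/2}(\Pbf-\Cbf_i)\Pbf^{-1/2}\big| &\le \ln\Big|\tfrac1d\sum_{i=1}^d\Pbf^{-1/2}(\Pbf-\Cbf_i)\Pbf^{-1/2}\Big|\\
    &= \ln\big|\Ibf-\Pbf_C\big|,
\end{align*}
and multiplying by $-d$ yields $I(\Zbf;\Ybf)\ge-d\ln|\Ibf-\Pbf_C|$. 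The three points that need care are: the deliberate use of the \emph{per-column} (hence generally suboptimal) estimator, which is precisely what makes the bound depend only on the $\Cbf_i$ and reproduces the stated form; the congruence by $\Pbf^{-1/2}$, needed so that concavity of $\ln\det$ may be applied to matrices that are a priori not Hermitian; and the nondegeneracy conditions $\Pbf\succ\zero$ and $\Exp[\overline{\ybf}_{:i}\overline{\ybf}_{:i}\herm]\succ\zero$, under which every determinant above is strictly positive and every differential entropy finite. I expect the main obstacle to be bookkeeping — the complex-Gaussian entropy constants and the similarity/congruence identities — rather than any conceptual difficulty.
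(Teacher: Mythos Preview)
Your argument is correct and follows the same skeleton as the paper's: write $I(\Zbf;\Ybf)=h(\Zbf)-h(\Zbf\mid\Ybf)$, upper-bound the conditional term via an LMMSE estimate and the Gaussian maximum-entropy inequality to obtain a sum $\sum_i\ln|\Pbf-\Cbf_i|$, and then collapse the $d$ log-determinants into one. The difference is purely in the tools used at each step. The paper vectorises, applies the Gaussian bound to the full error covariance $\Sigma_e$, and then invokes Fischer's inequality to pass to the per-column blocks $\Cbf_{ei}=\Pbf-\Cbf_i$; it then aggregates the columns via AM--GM together with an extended Hadamard/Minkowski-type inequality $\tfrac1d\sum_i|\Cbf_{ei}|^{1/N_t}\le|\tfrac1d\sum_i\Cbf_{ei}|^{1/N_t}$. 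Your route is more economical: working per column from the outset (via ``conditioning reduces entropy'') makes Fischer's step unnecessary, and the congruence by $\Pbf^{-1/2}$ followed by concavity of $\ln\det$ on the positive-definite cone replaces the AM--GM\,+\,Minkowski pair with a single Jensen step. Both arrive at the same intermediate inequality $\sum_i\ln|\Pbf-\Cbf_i|\le d\ln\big|\tfrac1d\sum_i(\Pbf-\Cbf_i)\big|$, so the arguments are equivalent; yours just uses fewer named inequalities.
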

\begin{proof}
 See Appendix \ref{sec:rate_lin_proof}. 
\end{proof}

Assume that the rows of the noise matrix $\xibf_{n:}$ are i.i.d.\ with some distribution
$ \Xi$ where $\Exp|\Xi|^2 < \infty$.
Then, using \eqref{eq:tau_alpha_gen} 
calculate $\Abf_{\rm rx}$ and $\Tbf_{\rm rx}$ with $\Phi(U) =\Wbf_{\rm rx}Q_{\rm rx}(F(\Wbf_{\rm tx}Q_{\rm tx}(U), \Xi))$, where $U \sim C\mathcal{N}(\zero, \Pbf),  ~ \Pbf = \sum_{m}\delta_m \Pbf_m$.

\begin{theorem} \label{thm:rate_lin}
Under the assumptions in Sec.~\ref{subsec:assump} and given \eqref{eq:zmix} and \eqref{eq:Pbar}, the linear rate is almost surely bounded below by,
\beq \label{eq:rate_lin}
    R_{\rm lin} \geq 
    \sum_{m=1}^M \delta_m \log\left(\left|\Ibf + \Abf_{\rm rx} \Pbf_{m}\Abf_{\rm rx} \herm \Tbf_{\rm rx}^{-1} \right| \right).
\eeq
\end{theorem}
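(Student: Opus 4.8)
The plan is to combine the information inequality of the Lemma with the linear‑model convergence of Proposition~\ref{prop:lin}, the latter specialized to the receiver chain exactly as in the proof of Theorem~\ref{thm:spec_fd}. Fix a realization of $\Vbf=\Vbf(N)$ and group the $N$ rows of $\Zbf$ and of $\Zbfhat$ by sub-band: let $\Zbf^{(m)}\in\C^{N_{st}\times N_m}$ be the matrix whose columns are the rows $\zbf_{n:}$ with $\ell_n=m$, where $N_m:=|\{n:\ell_n=m\}|$, and let $\Zbfhat^{(m)}$ be defined analogously from $\Zbfhat$. Because the rows of $\Zbf$ are mutually independent, $\Zbf^{(m)}$ is independent of $\Zbf^{(<m)}:=(\Zbf^{(1)},\dots,\Zbf^{(m-1)})$; hence the chain rule, the fact that conditioning on an independent variable cannot decrease mutual information, and the data-processing inequality (as $\Zbfhat^{(m)}$ is a sub-matrix of $\Zbfhat$) give
\[
    I(\Zbf;\Zbfhat)\;=\;\sum_{m}I\bigl(\Zbf^{(m)};\Zbfhat\mid\Zbf^{(<m)}\bigr)\;\geq\;\sum_{m}I\bigl(\Zbf^{(m)};\Zbfhat\bigr)\;\geq\;\sum_{m=1}^{M}I\bigl(\Zbf^{(m)};\Zbfhat^{(m)}\bigr).
\]
Within sub-band $m$ the columns of $\Zbf^{(m)}$ are i.i.d.\ $C\mathcal{N}(\zero,\Pbf_m)$, so the Lemma applies with $d=N_m$, $\Pbf=\Pbf_m$, yielding $I(\Zbf^{(m)};\Zbfhat^{(m)})\geq -N_m\ln|\Ibf-\Pbf_C^{(m)}|$, where $\Pbf_C^{(m)}=\Pbf_m^{-1}\frac{1}{N_m}\sum_{i=1}^{N_m}\Cbf_i^{(m)}$ and each $\Cbf_i^{(m)}$ is built from the per-row expected covariances $\Exp[\overline{\zbf}_{n:}\overline{\zbfhat}_{n:}\herm]$ and $\Exp[\overline{\zbfhat}_{n:}\overline{\zbfhat}_{n:}\herm]$ (expectation over $\Zbf,\Xibf$ with $\Vbf$ fixed) for the rows $n$ with $\ell_n=m$.

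Next I would compute the limits of these second-order statistics. Applying Proposition~\ref{prop:lin} with $\Phi(U)=\Wbf_{\rm rx}Q_{\rm rx}(F(\Wbf_{\rm tx}Q_{\rm tx}(U),\Xi))$ and with $X$ the limiting row distribution of $\Zbf$ — the mixture that puts mass $\delta_m$ on $C\mathcal{N}(\zero,\Pbf_m)$, so $\Exp[XX\herm]=\Pbf=\sum_m\delta_m\Pbf_m$ — the rows of the pair $(\Zbfhat,\Zbf)$ converge empirically to $(\Abf_{\rm rx}X+\Theta,X)$ with $\Theta\sim C\mathcal{N}(\zero,\Tbf_{\rm rx})$ independent of $X$. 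Restricting this to the rows with $\ell_n=m$ — for which $\zbf_{n:}$ is exactly $C\mathcal{N}(\zero,\Pbf_m)$, while the additive noise keeps covariance $\Tbf_{\rm rx}$ — and upgrading the resulting empirical statements to statements about the per-row expected covariances gives, almost surely along $\{\Vbf(N)\}$, $\Exp[\overline{\zbf}_{n:}\overline{\zbfhat}_{n:}\herm]\to\Pbf_m\Abf_{\rm rx}\herm$ and $\Exp[\overline{\zbfhat}_{n:}\overline{\zbfhat}_{n:}\herm]\to\Abf_{\rm rx}\Pbf_m\Abf_{\rm rx}\herm+\Tbf_{\rm rx}$ for $\ell_n=m$, whence $\Cbf_i^{(m)}\to\Pbf_m\Abf_{\rm rx}\herm(\Abf_{\rm rx}\Pbf_m\Abf_{\rm rx}\herm+\Tbf_{\rm rx})^{-1}\Abf_{\rm rx}\Pbf_m$ and $\Pbf_C^{(m)}\to\Abf_{\rm rx}\herm(\Abf_{\rm rx}\Pbf_m\Abf_{\rm rx}\herm+\Tbf_{\rm rx})^{-1}\Abf_{\rm rx}\Pbf_m$. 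This is precisely the receiver-side instance of the computation already carried out for Theorem~\ref{thm:spec_fd}.

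The remainder is algebra. With $\Kbf:=\Abf_{\rm rx}\Pbf_m\Abf_{\rm rx}\herm$, the limit of $\Ibf-\Pbf_C^{(m)}$ is $\Ibf-\Abf_{\rm rx}\herm(\Kbf+\Tbf_{\rm rx})^{-1}\Abf_{\rm rx}\Pbf_m$, whose determinant equals $|\Ibf-\Kbf(\Kbf+\Tbf_{\rm rx})^{-1}|=|\Tbf_{\rm rx}(\Kbf+\Tbf_{\rm rx})^{-1}|=|\Ibf+\Kbf\Tbf_{\rm rx}^{-1}|^{-1}$ by the Sylvester determinant identity together with $\Ibf-\Kbf(\Kbf+\Tbf_{\rm rx})^{-1}=\Tbf_{\rm rx}(\Kbf+\Tbf_{\rm rx})^{-1}$; hence $-\ln|\Ibf-\Pbf_C^{(m)}|\to\log|\Ibf+\Abf_{\rm rx}\Pbf_m\Abf_{\rm rx}\herm\Tbf_{\rm rx}^{-1}|$. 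Since $N_m/N\to\delta_m$ by \eqref{eq:delm_limit}, combining with the first paragraph,
\[
    \frac{1}{N} I(\Zbf;\Zbfhat)\;\geq\;\sum_{m=1}^{M}\frac{N_m}{N}\Bigl(-\ln\bigl|\Ibf-\Pbf_C^{(m)}\bigr|\Bigr)\;\longrightarrow\;\sum_{m=1}^{M}\delta_m\log\bigl|\Ibf+\Abf_{\rm rx}\Pbf_m\Abf_{\rm rx}\herm\Tbf_{\rm rx}^{-1}\bigr|
\]
almost surely; taking $\liminf_{N\to\infty}$ yields \eqref{eq:rate_lin}. The invertibility of $\Pbf_m$ and $\Tbf_{\rm rx}$ used here is removed by a standard perturbation/continuity argument.

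The main obstacle is the second paragraph: Proposition~\ref{prop:lin} delivers empirical convergence over the \emph{pooled} rows, whereas the Lemma requires convergence of the \emph{per-row expected} covariances \emph{within a single sub-band}. Closing this gap requires (i) showing that the cross-correlation between $\zbf_{n:}$ and the nonlinear residual $\zbfhat_{n:}-\Abf_{\rm rx}\zbf_{n:}$ vanishes — handled by Gaussian integration by parts (Stein's lemma) and the $O(1/\sqrt{N})$ magnitude of the entries of a Haar-distributed unitary matrix, i.e.\ the same machinery underlying Proposition~\ref{prop:lin} and Theorem~\ref{thm:spec_fd} — and (ii) upgrading averaged-over-$n$ convergence to uniform-over-rows concentration and passing from empirical convergence to convergence in expectation, which is where the boundedness of the quantizers, the Lipschitz property of $F$, and the row-permutation symmetry of the Haar ensemble (which makes the per-row statistics exchangeable within each sub-band) enter. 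Everything else is routine bookkeeping.
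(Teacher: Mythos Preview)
Your proposal is correct and follows essentially the same route as the paper: decompose $I(\Zbf;\Zbfhat)$ into sub-bands via the chain rule (your first display is exactly the paper's Lemma~\ref{lem:chain}), apply the mutual-information Lemma within each sub-band, invoke Proposition~\ref{prop:lin} on the full receiver map to obtain the limiting linear model $\hat Z=\Abf_{\rm rx}Z+W_{\rm rx}$, and then simplify the determinant---the paper uses the Woodbury identity to invert $\Ibf-\overline{\Pbf}_C^{(m)}$ before applying Sylvester, whereas you apply Sylvester directly to the determinant, which is equivalent and arguably cleaner. The technical concern you flag in your last paragraph (passing from empirical convergence over pooled rows to the per-row \emph{expected} covariances the Lemma requires, within a single sub-band) is a point the paper's proof simply glosses over.
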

\begin{proof}
See Appendix \ref{sec:rate_lin_proof}. 
\end{proof}

The lower bound can be interpreted as follows. For Gaussian inputs, the system in Fig. \ref{fig:Sys_Model} can be equivalently modeled with a MIMO system with channel matrix gain $\Abf_{\rm rx}$ where the channel outputs are added with a Gaussian noise vector of covariance $\Tbf_{\rm rx}$. The lower bound is basically the Shannon capacity of this equivalent linear system when the per sub-band covariance matrices are fixed. More precisely, for the case of MIMO channel with one sub-band and no quantization the lower bound leads to 
\beq 
    R_{\rm lin} \geq \log\left(\left|\Ibf + \frac{1}{\sigma^2}\Hbf \Pbf\Hbf \herm \right| \right)
\eeq
which is the Shannon capacity of MIMO channel $\Hbf$ with the transmit covariance matrix $\Pbf$. 

\section{Simulations and Numerical Results}
\label{sec:num_res}

\begin{figure*}[t]
\centering
\begin{subfigure}{0.31\linewidth}
    \centering
    \includegraphics[width=\textwidth]{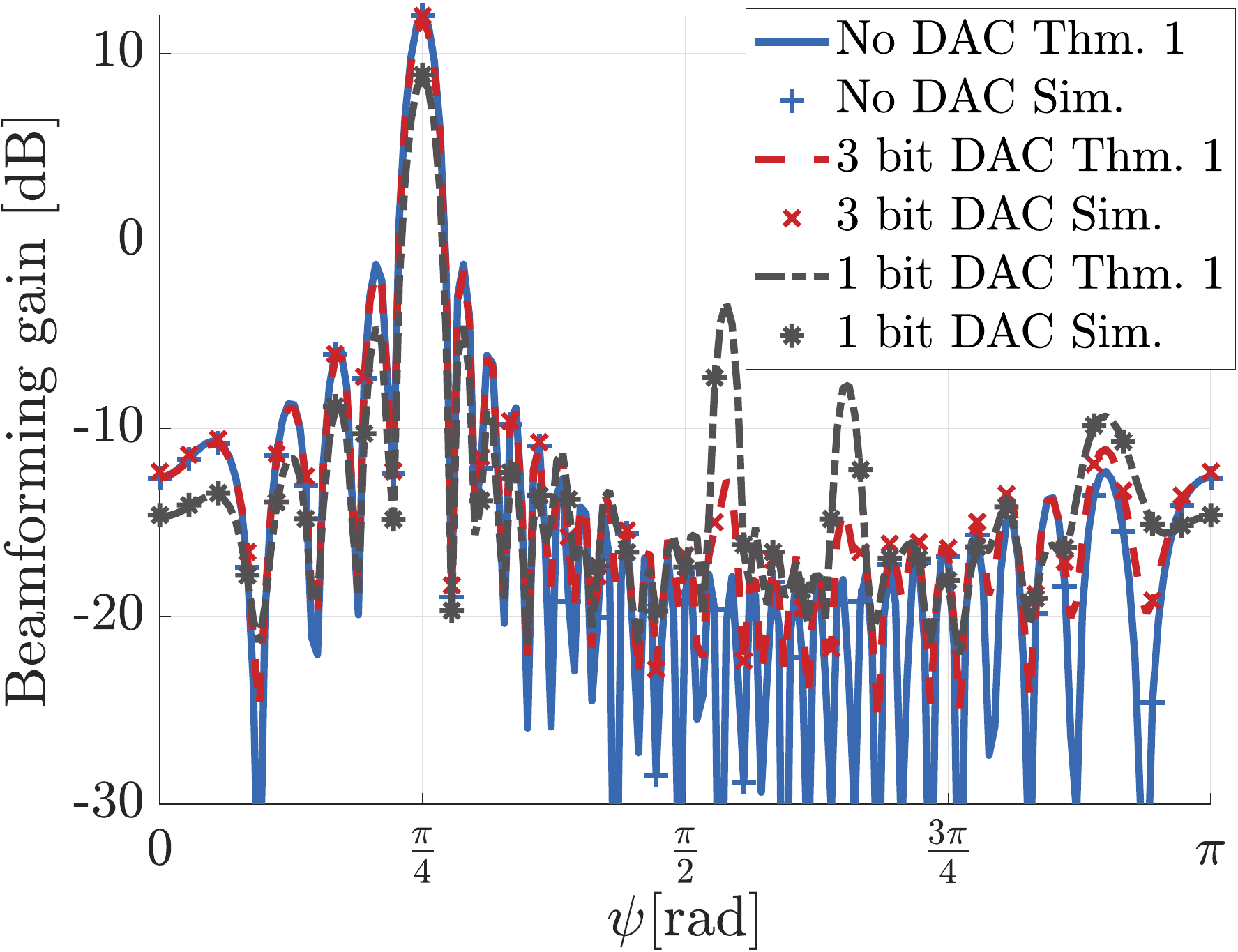}
    \caption{}
    \label{fig:EnergyVsRes}
\end{subfigure}
\begin{subfigure}{0.31\linewidth}
    \centering
    \includegraphics[width=\textwidth]{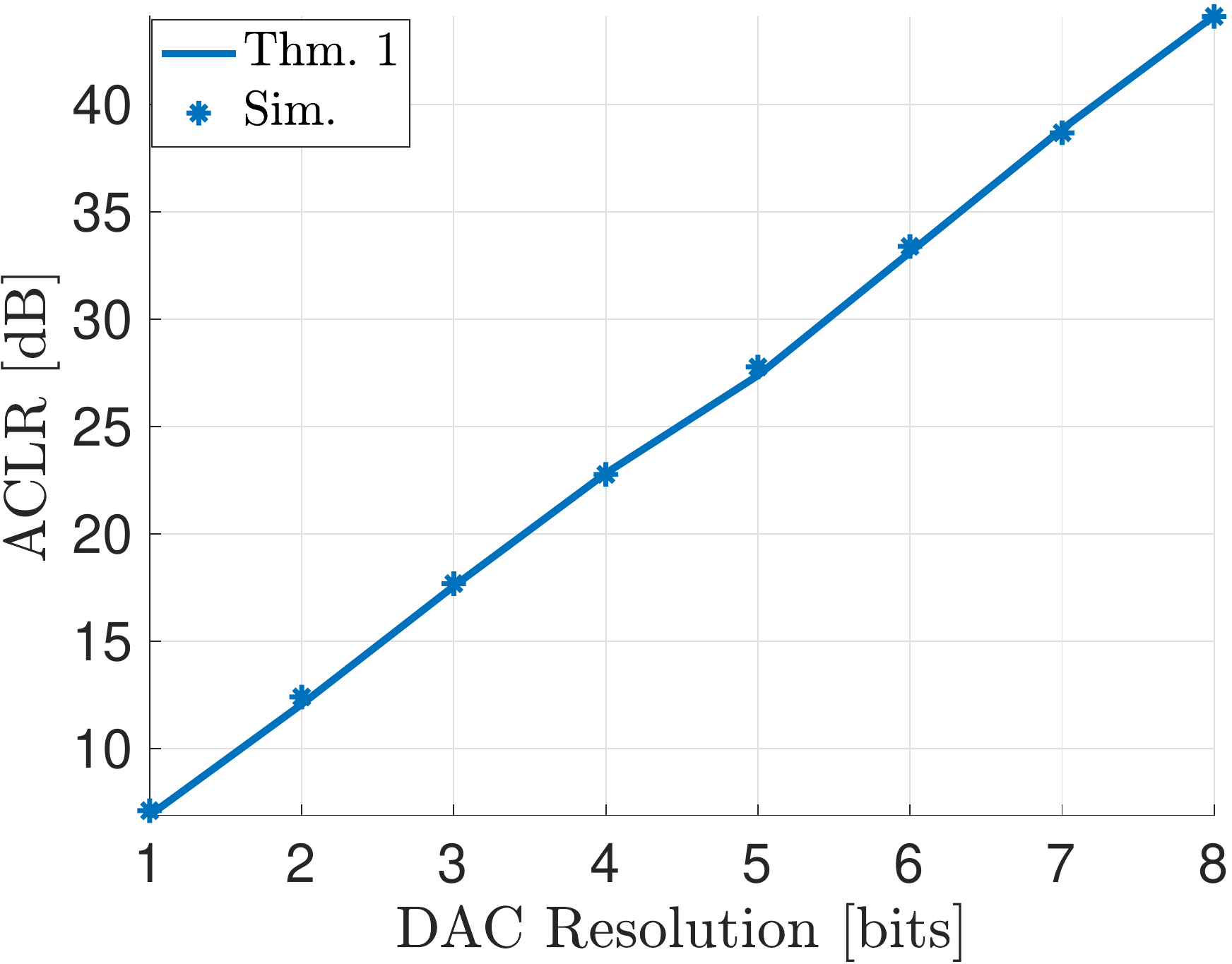}
    \caption{}
    \label{fig:ACLRVsRes}    
\end{subfigure}
\begin{subfigure}{0.31\linewidth}
    \centering
    \includegraphics[width=\textwidth]{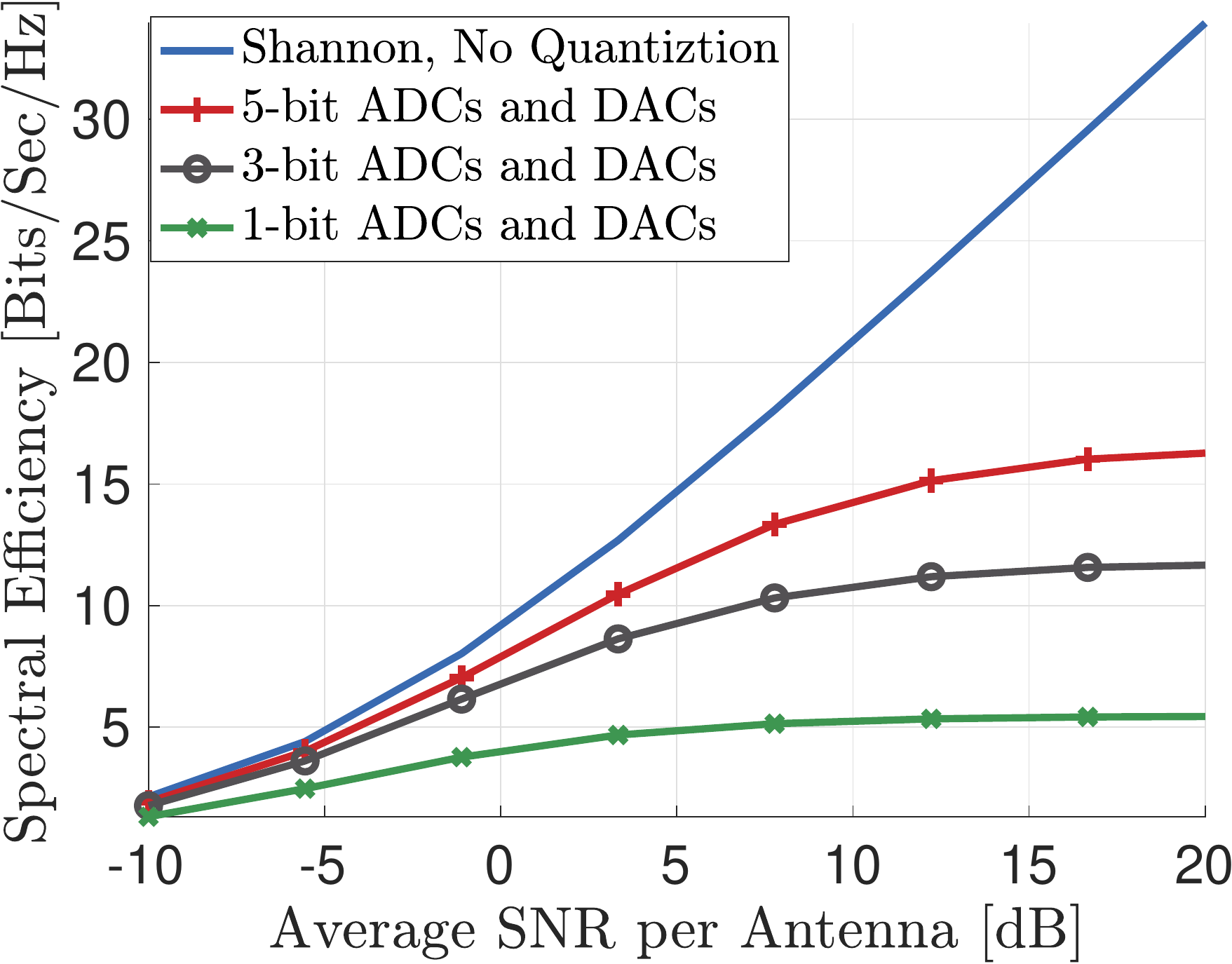}
    \caption{}
    \label{fig:RateVsSNR}
\end{subfigure}
\vspace*{-0.1cm}
\caption{(a) Spatial transmit power, ${\rm BF}_m(\cdot)$, for the transmit sub-band considering different DAC resolutions based on \eqref{eq:bfg} for $\phi = \pi/4$. (b)  ACLR based on  \eqref{eq:ACLR} for different DAC resolutions. (c) Achievable rate of the system based on Thm.~\ref{thm:rate_lin} for different values of average received SNR per antenna and DACs and ADCs resolutions.}
\vspace*{-0.4cm}
\end{figure*}

For our simulations, we consider a MIMO system, where both transmitter and receiver use ULAs with half wavelength spacing and $16$ transmit and eight receive antennas equipped with low resolution uniform DACs and ADCs, respectively. Moreover, for each quantizer, its spacing is optimized to minimize the mean square error distortion between its input and output according to \cite{max1960quantizing}. We assume that the analog precoder and combiner are identity matrices. Also, we use a Rayleigh fading channel whose elements are generated using i.i.d. symmetric complex Gaussian distribution with variance one and mean zero (i.e., $C{\cal N}(0,1)$). We consider that there are two sub-bands, $m=2$, and the transmitter is transmitting in the first sub-band only (i.e., $\Pbf_2 = \mathbf{0}$). For evaluating matrices $\Abf$ and $\Tbf$ for Thm.~\ref{thm:spec_fd} and Thm.~\ref{thm:rate_lin}, we use $1024$ point FFT (i.e., $N  = 1024$). More precisely, we assume that the first $512$ samples correspond to the desired transmit band and the rest correspond to the adjacent band. Furthermore, we average the simulations over $100$ channel realizations.

\paragraph*{Spectrospatial power}  
First, we validate the accuracy of the per sample power formula in \eqref{eq:sm_fd} from Thm.~\ref{thm:spec_fd}. Let us consider that the transmitter performs digital beamforming to direct its transmit stream towards angle $\phi$. More precisely, to generate $\zbf_{n:}, n\in[512]$, it uses the covariance matrix 
\begin{equation}
    \label{eq:simcov}
    \Pbf_1 = \frac{1}{N_t}\ebf(\phi)\ebf(\phi)\herm, 
\end{equation}
where $\ebf(\phi) = [1, e^{i\pi \cos(\phi)}, \ldots, e^{(N_t-1)i\pi \cos(\phi)}]\tran$ is the array response of the transmitter's ULA. 
In Fig.~\ref{fig:EnergyVsRes}, we have plotted the spatial transmit power, ${\rm BF}_m(\psi)$ for $\psi\in(0,\pi]$ using \eqref{eq:bfg} for $\phi = \pi/4$ considering different DAC resolutions when $\Sbf_1$ is calculated based on Thm.~\ref{thm:spec_fd} and when it is evaluated through simulation. We observe that the theorem accurately predicts the simulations. Furthermore, as the DAC resolution decreases the main lobe gain is reduced while side lobes' gains are increased, leaking more power into the side-lobes. 

To evaluate our model's accuracy for the spectral power distribution, let us define the ACLR of this transmitter as ratio of the transmit power in the first sub-band over the leaked power in the second sub-band. Based on \eqref{eq:nu_fd}, we have
\begin{align}
\label{eq:ACLR}
    {\rm ACLR} = \frac{\nu_1}{\nu_2}.
\end{align}

In Fig.~\ref{fig:ACLRVsRes}, we have plotted the ACLR for the covariance matrix in \eqref{eq:simcov} with $\phi = \pi/4$ considering different DAC resolutions when $\nu_1$ and $\nu_2$ are calculated using Thm.~\ref{thm:spec_fd} and when they are evaluated through simulation. As observed, the theorem accurately predicts the simulations and as expected the ACLR is an increasing function of the DAC resolution.

\paragraph*{Achievable rate of the system} 

Next, we use Thm.~\ref{thm:rate_lin} to plot the achievable rate of the system. For this case we use the identity matrix for $\Pbf_1$. The rate lower bound considering DAC and ADC resolutions of one, three, and five bits are plotted based on \eqref{eq:rate_lin} from Thm.~\ref{thm:rate_lin} in Fig.~\ref{fig:RateVsSNR}. As a performance benchmark and upper bound, we have also plotted the Shannon capacity of the system corresponding to infinite resolution DACs and ADCs. As expected, increasing the resolution of DACs and ADCs improves the rate. We also observe that the system performs very close to the upper bound at low SNRs.

\section{Conclusion}
We have studied quantized MIMO systems with linear modulation and provided a rigorous, yet simple equivalent linear model in the limit of large random transformations for modulation. The provided model accurately captures the effect of quantization on the spectrospatial power distribution and achievable rate of the system. We have also validated the spectrospatial predictions through simulations and numerical evaluations. 
\clearpage
\balance
\bibliographystyle{IEEEtran}
\bibliography{bibl}

\begin{thebibliography}{10}
\providecommand{\url}[1]{#1}
\csname url@samestyle\endcsname
\providecommand{\newblock}{\relax}
\providecommand{\bibinfo}[2]{#2}
\providecommand{\BIBentrySTDinterwordspacing}{\spaceskip=0pt\relax}
\providecommand{\BIBentryALTinterwordstretchfactor}{4}
\providecommand{\BIBentryALTinterwordspacing}{\spaceskip=\fontdimen2\font plus
\BIBentryALTinterwordstretchfactor\fontdimen3\font minus
  \fontdimen4\font\relax}
\providecommand{\BIBforeignlanguage}[2]{{%
\expandafter\ifx\csname l@#1\endcsname\relax
\typeout{** WARNING: IEEEtran.bst: No hyphenation pattern has been}%
\typeout{** loaded for the language `#1'. Using the pattern for}%
\typeout{** the default language instead.}%
\else
\language=\csname l@#1\endcsname
\fi
#2}}
\providecommand{\BIBdecl}{\relax}
\BIBdecl

\bibitem{rappaport2014millimeter}
T.~S. Rappaport \emph{et~al.}, \emph{Millimeter wave wireless
  communications}.\hskip 1em plus 0.5em minus 0.4em\relax Pearson Education,
  2014.

\bibitem{rangan2014millimeter}
S.~Rangan, T.~S. Rappaport, and E.~Erkip, ``Millimeter-wave cellular wireless
  networks: Potentials and challenges,'' \emph{Proceedings of the IEEE}, vol.
  102, no.~3, pp. 366--385, 2014.

\bibitem{abbas2017millimeter}
W.~B. Abbas, F.~Gomez-Cuba, and M.~Zorzi, ``{Millimeter wave receiver
  efficiency: A comprehensive comparison of beamforming schemes with low
  resolution ADCs},'' \emph{IEEE Trans. Wireless Commun.}, vol.~16, no.~12, pp.
  8131--8146, 2017.

\bibitem{zhang2018low}
J.~Zhang \emph{et~al.}, ``{On low-resolution ADCs in practical 5G
  millimeter-wave massive MIMO systems},'' \emph{IEEE Commun. Mag.}, vol.~56,
  no.~7, pp. 205--211, 2018.

\bibitem{abdelghany2018towards}
M.~Abdelghany \emph{et~al.}, ``{Towards All-digital mmWave Massive MIMO:
  Designing around Nonlinearities},'' in \emph{Proc.\ IEEE Asilomar Conf.
  Signals, Syst., Comput.}, 2018, pp. 1552--1557.

\bibitem{abbasISIT2018}
A.~Khalili \emph{et~al.}, ``On {MIMO} channel capacity with output quantization
  constraints,'' \emph{Proc. IEEE Int. Symp. Inf. Theory}, pp. 1355--1359, June
  2018.

\bibitem{abbas2019highsnr}
A.~{Khalili} \emph{et~al.}, ``Tradeoff between delay and high {SNR} capacity in
  quantized {MIMO} systems,'' \emph{in Proc. IEEE Int. Symp. Inf. Theory}, pp.
  597--601, Jul. 2019.

\bibitem{abbas2019MT}
------, ``On multiterminal communication over {MIMO} channels with one-bit
  {ADCs} at the receivers,'' \emph{Proc. IEEE Int. Symp. Inf. Theory}, pp.
  602--606, July 2019.

\bibitem{Abbas2020Thr}
------, ``On throughput of millimeterwave {MIMO} systems with low resolution
  {ADCs},'' \emph{Proc. IEEE Int. Conf. on Acoustics, Speech and Signal
  Processing (ICASSP)}, pp. 5255--5259, 2020.

\bibitem{singh2009limits}
J.~Singh, O.~Dabeer, and U.~Madhow, ``On the limits of communication with
  low-precision analog-to-digital conversion at the receiver,'' \emph{IEEE
  Trans. Commun.}, vol.~57, no.~12, pp. 3629--3639, 2009.

\bibitem{koch2013low}
T.~Koch and A.~Lapidoth, ``At low {SNR}, asymmetric quantizers are better,''
  \emph{IEEE Trans. Inf. Information Theory}, vol.~59, no.~9, pp. 5421--5445,
  2013.

\bibitem{nossek2006capacity}
J.~A. Nossek and M.~T. Ivrla{\v{c}}, ``Capacity and coding for quantized {MIMO}
  systems,'' in \emph{Proc. Intl. Conf. Wireless Commun. and Mobile
  Comput.}\hskip 1em plus 0.5em minus 0.4em\relax ACM, 2006, pp. 1387--1392.

\bibitem{orhan2015low}
O.~Orhan, E.~Erkip, and S.~Rangan, ``Low power analog-to-digital conversion in
  millimeter wave systems: Impact of resolution and bandwidth on performance,''
  in \emph{Proc.\ IEEE Inf. Theory Appl. Wkshp. (ITA)}.\hskip 1em plus 0.5em
  minus 0.4em\relax IEEE, 2015, pp. 191--198.

\bibitem{dutta2019case}
S.~{Dutta} \emph{et~al.}, ``A case for digital beamforming at {mmWave},''
  \emph{IEEE Trans. Wireless Commun.}, pp. 1--1, 2019.

\bibitem{mo2015capacity}
J.~Mo and R.~W. Heath, ``Capacity analysis of one-bit quantized {MIMO} systems
  with transmitter channel state information,'' \emph{IEEE Trans. Signal
  Process.}, vol.~63, no.~20, pp. 5498--5512, 2015.

\bibitem{rini2017general}
S.~Rini \emph{et~al.}, ``A general framework for low-resolution receivers for
  {MIMO} channels,'' \emph{arXiv preprint arXiv:1702.08133}, 2017.

\bibitem{mezghani2012capacity}
A.~Mezghani and J.~A. Nossek, ``Capacity lower bound of {MIMO} channels with
  output quantization and correlated noise,'' in \emph{Proc. IEEE Int. Symp.
  Inf. Theory}, 2012, pp. 1--5.

\bibitem{Studer2016}
C.~Studer and G.~Durisi, ``Quantized massive {MU-MIMO-OFDM} uplink,''
  \emph{IEEE Trans. Commun.}, vol.~64, no.~6, pp. 2387--2399, Jun. 2016.

\bibitem{jacobsson2017throughput}
S.~Jacobsson \emph{et~al.}, ``Throughput analysis of massive {MIMO} uplink with
  low-resolution {ADC}s,'' \emph{IEEE Trans. Wireless Commun.}, vol.~16, no.~6,
  pp. 4038--4051, 2017.

\bibitem{mollen2016uplink}
C.~Mollen \emph{et~al.}, ``Uplink performance of wideband massive {MIMO} with
  one-bit {ADC}s,'' \emph{IEEE Trans. Wireless Commun.}, vol.~16, no.~1, pp.
  87--100, 2016.

\bibitem{mo2017hybrid}
J.~Mo \emph{et~al.}, ``Hybrid architectures with few-bit {ADC} receivers:
  Achievable rates and energy-rate tradeoffs,'' \emph{IEEE Trans. Wireless
  Commun.}, vol.~16, no.~4, pp. 2274--2287, 2017.

\bibitem{khalili2021mimo}
A.~Khalili \emph{et~al.}, ``{MIMO} networks with one-bit {ADCs}: Receiver
  design and communication strategies,'' \emph{IEEE Transactions on
  Communications}, 2021.

\bibitem{skrimponis2022understanding}
P.~Skrimponis \emph{et~al.}, ``Understanding energy efficiency and interference
  tolerance in millimeter wave receivers,'' \emph{arXiv preprint
  arXiv:2201.00229}, 2022.

\bibitem{skrimponis2020towards}
------, ``Towards energy efficient mobile wireless receivers above 100 {GHz},''
  \emph{IEEE Access}, vol.~9, pp. 20\,704--20\,716, 2020.

\bibitem{gersho2012vector}
A.~Gersho and R.~M. Gray, \emph{Vector quantization and signal
  compression}.\hskip 1em plus 0.5em minus 0.4em\relax Springer Science \&
  Business Media, 2012, vol. 159.

\bibitem{fletcher2007robust}
A.~K. Fletcher \emph{et~al.}, ``Robust predictive quantization: Analysis and
  design via convex optimization,'' \emph{IEEE J. Sel. Topics Signal Process.},
  vol.~1, no.~4, pp. 618--632, 2007.

\bibitem{marco2005validity}
D.~Marco and D.~L. Neuhoff, ``The validity of the additive noise model for
  uniform scalar quantizers,'' \emph{IEEE Trans. Inf. Theory}, vol.~51, no.~5,
  pp. 1739--1755, 2005.

\bibitem{gray1993dithered}
R.~M. Gray and T.~G. Stockham, ``Dithered quantizers,'' \emph{IEEE Trans. Inf.
  Theory}, vol.~39, no.~3, pp. 805--812, 1993.

\bibitem{zamir1996LQN}
R.~{Zamir} and M.~{Feder}, ``On lattice quantization noise,'' \emph{IEEE Trans.
  Inf. Theory}, vol.~42, no.~4, pp. 1152--1159, July 1996.

\bibitem{derpich2008quadratic}
M.~S. Derpich, J.~{\O}stergaard, and G.~C. Goodwin, ``The quadratic {Gaussian}
  rate-distortion function for source uncorrelated distortions,'' in \emph{Data
  Compression Conf.}\hskip 1em plus 0.5em minus 0.4em\relax IEEE, 2008, pp.
  73--82.

\bibitem{dutta2020capacity}
S.~Dutta \emph{et~al.}, ``Capacity bounds for communication systems with
  quantization and spectral constraints,'' \emph{arXiv preprint
  arXiv:2001.03870}, 2020.

\bibitem{bayati2011dynamics}
M.~Bayati and A.~Montanari, ``The dynamics of message passing on dense graphs,
  with applications to compressed sensing,'' \emph{IEEE Trans. Inf. Theory},
  vol.~57, no.~2, pp. 764--785, 2011.

\bibitem{rangan2019vector}
S.~Rangan, P.~Schniter, and A.~K. Fletcher, ``Vector approximate message
  passing,'' \emph{IEEE Trans. Inf. Theory}, 2019.

\bibitem{landsman2008stein}
Z.~Landsman and J.~Ne{\v{s}}lehov{\'a}, ``Stein's lemma for elliptical random
  vectors,'' \emph{Journal of Multivariate Analysis}, vol.~99, no.~5, pp.
  912--927, 2008.

\bibitem{max1960quantizing}
J.~Max, ``Quantizing for minimum distortion,'' \emph{IRE Transactions on
  Information Theory}, vol.~6, no.~1, pp. 7--12, 1960.

\bibitem{fang2018symmetric}
K.~W. Fang, \emph{Symmetric multivariate and related distributions}.\hskip 1em
  plus 0.5em minus 0.4em\relax CRC Press, 2018.

\bibitem{kingman1972random}
J.~F. Kingman, ``On random sequences with spherical symmetry,''
  \emph{Biometrika}, vol.~59, no.~2, pp. 492--494, 1972.

\bibitem{fischer1908hadamardschen}
E.~Fischer, ``{\"U}ber den hadamardschen determinantensatz,'' \emph{Arch.
  Math.(Basel)}, vol.~13, pp. 32--40, 1908.

\bibitem{li2020extensions}
Y.~Li and L.~Feng, ``Extensions of brunn-minkowski's inequality to multiple
  matrices,'' \emph{Linear Algebra and its Applications}, vol. 603, pp.
  91--100, 2020.

\end{thebibliography}

\clearpage
\appendices  

\section{Proof of Proposition \ref{prop:matdist}}
\label{app:matdist}

For the proof of this proposition, we first prove the following results on left circularly symmetric matrices and functional equations.

\begin{lemma}
\label{lem:quad}
Let $\Vbf \in \mathbb{C}^{N\times N}$ be a left spherical symmetric matrix. Then, its characteristic function $\Phi(\Tbf)$, satisfies the followings
\begin{enumerate}
    \item $\Phi(\Tbf) = \Phi(\Vbf\herm \Tbf)$ for $\Vbf \in O(N)$, where $O(N)$ denotes the set of $N \times N$ orthogonal matrices.
    \item $\Phi(\Tbf)$ is a function of $\Tbf\herm \Tbf$.
\end{enumerate}
\end{lemma}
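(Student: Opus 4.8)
\emph{Proof plan.} Recall that the characteristic function of the random matrix $\Vbf$ is $\Phi(\Tbf) = \Exp\big[\exp(\mathrm{i}\,\Re\,\tr(\Tbf\herm\Vbf))\big]$, i.e.\ the characteristic function of $\Vbf$ viewed as a random vector under the real pairing $\langle\Tbf,\Vbf\rangle := \Re\,\tr(\Tbf\herm\Vbf)$ (any equivalent normalization behaves identically in what follows). The plan is to prove part~1 by a one-line change of variables using left spherical symmetry of $\Vbf$, and then to obtain part~2 by showing that any two arguments with the same $\Tbf\herm\Tbf$ are related by left multiplication by a matrix in $O(N)$, so that part~1 forces their characteristic functions to coincide.

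For part~1, fix $\Qbf\in O(N)$. Using $(\Qbf\herm\Tbf)\herm=\Tbf\herm\Qbf$ and the invariance in distribution $\Qbf\Vbf\eqd\Vbf$,
\begin{align*}
    \Phi(\Qbf\herm\Tbf)
    &= \Exp\big[\exp(\mathrm{i}\,\Re\,\tr(\Tbf\herm(\Qbf\Vbf)))\big]\\
    &= \Exp\big[\exp(\mathrm{i}\,\Re\,\tr(\Tbf\herm\Vbf))\big] = \Phi(\Tbf),
\end{align*}
the middle equality applying $\Qbf\Vbf\eqd\Vbf$ inside the expectation. Since $\Qbf\herm$ ranges over $O(N)$ as $\Qbf$ does, this is the same as $\Phi(\Qbf\Tbf)=\Phi(\Tbf)$ for every $\Qbf\in O(N)$.

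For part~2, I would first show that $\Tbf_1\herm\Tbf_1=\Tbf_2\herm\Tbf_2$ implies $\Tbf_2=\Qbf\Tbf_1$ for some $\Qbf\in O(N)$. Put $M:=\Tbf_1\herm\Tbf_1$ and let $M^{1/2}$ be its positive semidefinite square root, so $\Range(M^{1/2})=\Range(M)=\Range(\Tbf_i\herm)$ for $i=1,2$. For each $i$ define a map on $\Range(M^{1/2})$ by $\Qbf_i\!\left(M^{1/2}\vbf\right):=\Tbf_i\vbf$; it is well defined and isometric since $\|\Tbf_i\vbf\|^2=\vbf\herm M\vbf=\|M^{1/2}\vbf\|^2$, and the same identity shows its image is $\Range(\Tbf_i)$. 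Extending it by any isometry from $\Range(M^{1/2})^{\perp}$ onto $\Range(\Tbf_i)^{\perp}$ (the two orthogonal complements have equal dimension) gives $\Qbf_i\in O(N)$ with $\Tbf_i=\Qbf_iM^{1/2}$, hence $\Tbf_2=(\Qbf_2\Qbf_1\herm)\Tbf_1$ with $\Qbf_2\Qbf_1\herm\in O(N)$. Part~1 then yields $\Phi(\Tbf_2)=\Phi(\Tbf_1)$; since $\Tbf_1,\Tbf_2$ were arbitrary with $\Tbf_1\herm\Tbf_1=\Tbf_2\herm\Tbf_2$, this says $\Phi(\Tbf)$ depends on $\Tbf$ only through $\Tbf\herm\Tbf$.

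The main obstacle is the minor one of carrying out the polar-type factorization $\Tbf_i=\Qbf_iM^{1/2}$ when $M$ (equivalently $\Tbf_i$) is rank deficient: one cannot simply take $\Qbf_i=\Tbf_iM^{-1/2}$, so $\Qbf_i$ must be built first on $\Range(M^{1/2})$ and then extended to a unitary on the orthogonal complement. Everything else is routine bookkeeping. I would also reconcile the group convention: in the complex ("circularly symmetric") setting the relevant invariance group is the unitary group, and the connecting matrix $\Qbf_2\Qbf_1\herm$ is then unitary — the argument is otherwise unchanged.
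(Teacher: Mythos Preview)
Your proposal is correct and is precisely the standard argument the paper is pointing to: the paper does not actually write out a proof but only states that it is ``a straightforward extension of the proof of \cite[Theorem~2.1]{fang2018symmetric}''; your change-of-variables for part~1 and your polar-factorization/isometry-extension for part~2 are exactly that extension. One small remark: since the lemma takes $\Vbf\in\C^{N\times N}$, the argument $\Tbf$ is $N\times N$ as well, so your claim that $\Range(M^{1/2})^\perp$ and $\Range(\Tbf_i)^\perp$ have equal dimension is automatic and your $\Qbf_i\in O(N)$ construction goes through without the rectangular-case caveat.
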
 
The proof is a straightforward extension of the proof of \cite[Theorem 2.1]{fang2018symmetric}.  

\begin{lemma}
\label{lem:func}
Let $\Fbf(\abf) = \Fbf(\bbf)\Fbf(\cbf)$ ,where $\Fbf(\tbf)$ is a mapping from $\mathbb{C}^{d}$ to $\mathbb{C}$ and a function of $\tbf\tbf\herm$ and $\abf \abf \herm = \bbf \bbf \herm+ \cbf \cbf \herm$. Therefore, 
\beq 
\Fbf(\abf) = e^{-\frac{1}{4} \abf \herm \Pbf \abf},
\eeq
where $\Pbf \in \mathbb{C}^{d \times d}$.
\end{lemma}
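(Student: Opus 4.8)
The plan is to read the displayed identity as a multiplicative (Cauchy-type) functional equation and solve it. Since $\Fbf(\tbf)$ depends on $\tbf$ only through $\tbf\tbf\herm$, I would write $\Fbf(\tbf)=h(\tbf\tbf\herm)$ for a scalar function $h$ on the cone of matrices of the form $\tbf\tbf\herm$. The hypothesis $\abf\abf\herm=\bbf\bbf\herm+\cbf\cbf\herm$ then converts $\Fbf(\abf)=\Fbf(\bbf)\Fbf(\cbf)$ into
\beq
    h(\Pbf_1+\Pbf_2)=h(\Pbf_1)\,h(\Pbf_2),\qquad \Pbf_1:=\bbf\bbf\herm,\ \Pbf_2:=\cbf\cbf\herm,
\eeq
for every admissible decomposition. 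Taking $\bbf=\cbf=\zero$ gives $h(\zero)^2=h(\zero)$; since $\Fbf$ is a characteristic function, $h(\zero)=\Fbf(\zero)=1$, which discards the branch $h\equiv0$ and, by continuity of $\Fbf$, makes $h$ nonvanishing near $\zero$.

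Next I would solve the equation. Along a ray $t\mapsto h(t\Pbf_0)$ (for a fixed positive semidefinite $\Pbf_0$ and $t\ge0$) the map is continuous, equals $1$ at $t=0$, and is multiplicative in $t$, hence equals $e^{-c(\Pbf_0)t}$ for some $c(\Pbf_0)\in\C$; in particular $h$ never vanishes, so $L:=-\ln h$ is globally defined, the functional equation becomes additivity $L(\Pbf_1+\Pbf_2)=L(\Pbf_1)+L(\Pbf_2)$, and the ray computation gives homogeneity $L(t\Pbf_0)=t\,L(\Pbf_0)$ for $t\ge0$. Additivity together with positive homogeneity (using continuity to reach irrational scalars) identifies $L$ with the restriction to the cone of an $\R$-linear functional on Hermitian matrices, so $L(\Pbf)=\tr(\Qbf\Pbf)$ for a fixed Hermitian matrix $\Qbf$, which is positive semidefinite because $|\Fbf|\le1$ forces $\Re L\ge0$ on the cone. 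Undoing the substitution, $\Fbf(\abf)=h(\abf\abf\herm)=e^{-\tr(\Qbf\abf\abf\herm)}=e^{-\abf\herm\Qbf\abf}$, and writing $\Pbf:=4\Qbf$ (the factor $\tfrac14$ being merely a normalization convention, since $\Pbf$ is otherwise free) gives $\Fbf(\abf)=e^{-\frac14\abf\herm\Pbf\abf}$.

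The step I expect to be the main obstacle is upgrading the functional equation from the \emph{collinear} configurations it literally supplies to additivity on the \emph{full} cone of positive semidefinite matrices: a sum of two rank-one PSD matrices is itself rank-one only when the two vectors are proportional, so the bare statement constrains $\Fbf$ only along individual rays. I would resolve this using the matrix-valued form of the relation that actually arises in the proof of Proposition~\ref{prop:matdist} — where the relevant objects are stacked matrices and $\abf\abf\herm$ is a general Gram matrix, so arbitrary PSD sums are realized — or, alternatively, by combining Lemma~\ref{lem:quad} with continuity to show that the per-ray exponents $c(\cdot)$ assemble into a single Hermitian quadratic form. The remaining ingredients — nonvanishing of $h$, excluding non-measurable additive solutions, and fixing the constant — are routine once $\Fbf$ is known to be a continuous characteristic function.
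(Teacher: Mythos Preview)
Your approach is essentially the paper's: rewrite $\Fbf$ as a function $\Gbf$ of the outer product, turn the multiplicative equation into an additive one for $\Jbf=\ln\Gbf$, and then identify $\Jbf$ as a linear trace functional, $\Jbf(\Tbf)=-\tfrac14\,\tr(\Pbf\Tbf)$. The paper simply asserts the last step by writing $\Jbf(\Tbf)=\sum_j {\pbf_l}_j\tran\Tbf{\pbf_r}_j$ and collapsing it to a trace, without discussing regularity or the domain on which additivity is known; your treatment via rays, positive homogeneity, and continuity is a cleaner justification of the same step.

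The ``main obstacle'' you flag is real and is not addressed in the paper's own proof either: with $\abf,\bbf,\cbf\in\C^d$, the constraint $\abf\abf\herm=\bbf\bbf\herm+\cbf\cbf\herm$ forces $\bbf$ and $\cbf$ to be parallel, so the bare lemma hypothesis only yields the multiplicative identity along rays. Your proposed resolution --- lifting to the matrix form that actually occurs in the proof of Proposition~\ref{prop:matdist}, where the arguments are rows $\tbf_n$ of a matrix $\Tbf$ and the relevant Gram matrix $\Tbf\herm\Tbf=\sum_n \tbf_n\tbf_n\herm$ ranges over all PSD matrices --- is exactly the right fix; the paper implicitly relies on this but does not spell it out.
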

\begin{proof}
Since $\Fbf(\abf)$ is a function of $\abf\abf\herm$, we can write it as $\Fbf(\tbf) = \Gbf(\tbf\herm \tbf)$. Set $\Abf = \abf\abf\herm, \Bbf = \bbf\bbf\herm$, and $\Cbf = \cbf\cbf\herm$ we have: 
\beq 
\Gbf(\Abf) = \Gbf(\Bbf)\Gbf(\Cbf) 
\eeq
where $\Gbf(\Tbf)$ is a mapping from $\mathbb{C}^{d\times d}$ to $\mathbb{C}$. Therefore, 
\beq 
\label{eq:func1}
\Gbf(\Tbf) = e^{\Jbf(\Tbf)},  
\eeq
where $\Jbf(\Tbf)$ is  mapping from $\mathbb{C}^{d\times d}$ to $\mathbb{C}$ and $\Jbf(\Abf) =  \Jbf(\Bbf)+ \Jbf(\Cbf)$. Therefore,
\begin{align}
\begin{aligned}
    \!\!\Jbf(\Tbf) &= \sum_{j} {\pbf_l}_j\tran \Tbf {\pbf_r}_j  \sum_{j}\Tr\left ( {\pbf_l}_j\tran \Tbf {\pbf_r}_j\right) =  \sum_{j}\Tr\left(\Tbf {\pbf_r}_j {\pbf_l}_j\tran\right)\\
    &= \Tr\left(\sum_{j}\Tbf {\pbf_r}_j {\pbf_l}_j\tran\right)\!\! =\! \Tr\left(\!\Tbf\left\{\sum_{j} {\pbf_r}_j {\pbf_l}_j\tran\right\}\!\right)\!. 
\end{aligned}
\end{align}
As a result, we can write 
\beq 
\label{eq:func2}
\Jbf(\Tbf) = -\frac{1}{4} \tr\left(\Pbf\Tbf\right),  
\eeq
Where $\Pbf \in \mathbb{C}^{d\times d}$. Using $\Tbf = \tbf\tbf \herm$ and substituting \eqref{eq:func2}  in \eqref{eq:func1}, we have 
\beq 
\Fbf(\tbf) = e^{-\frac{1}{4} \tr\left(\Pbf\tbf\tbf \herm\right)} = e^{-\frac{1}{4}\tr\left(\tbf \herm\Pbf\tbf\right)} = e^{-\frac{1}{4}\tbf \herm\Pbf\tbf}.  
\eeq
Which concludes the proof.
\end{proof}

To prove the proposition, we use a similar proof as of the paper \cite{kingman1972random}. 
Note that the matrix $\Ubf$ is left-spherically symmetric. Therefore, any rotation of the rows of matrix $\Ubf$ would preserve its distribution. Hence, using Finetti's theorem, there is a $\sigma-field, \Psi$  of events conditional upon which the $\ubf_{n:}, n\in[N]$ are independent and have
the same distribution function $F(\ubf)$. Define, 
\begin{align}
\begin{aligned}
    \Phi(\tbf) &= \int e^{i \Re\left ( \tr \left( \tbf\herm \ubf\right) \right)} d F(\ubf) = \Exp \left[e^{i \Re\left ( \tr \left( \tbf\herm \ubf\right) \right)}\right|\Psi \Big],
\end{aligned}
\end{align}
where $\tbf \in \mathbb{C}^{d}$.

The conditional independence means that, for matrix $\Tbf \in \mathbb{C}^{N\times d}$,
\begin{align}
    &\Exp \left[ e^  {i \Re\left ( \tr \left( \Tbf\herm \Ubf\right) \right)}  \Big|\Psi \right] = \prod_{n=1}^N \Phi(\tbf_r) \label{eq:ch1}\\
    &\Exp \left[ e^ { i \Re\left ( \tr \left( \Tbf\herm \Ubf\right) \right) }\right] = \Exp \left[\prod_{n=1}^N \Phi(\tbf_r) \right] \label{eq:ch2}
\end{align}
From Lemma \ref{lem:quad}, we know that the left and therefore the right side of \eqref{eq:ch2} are function of $\Tbf \herm \Tbf = \sum_{r= 1}^N \tbf_r\herm \tbf_r$. 
Let $\tbf\herm \tbf = \hbf \herm \hbf + \gbf \herm \gbf$. Then,
\begin{align}
\label{eq:chareq}
\begin{aligned}
    &\Exp \left[|\Phi(\tbf) - \Phi(\hbf)\Phi(\gbf)|^2 \right] = \Exp[\Phi(\tbf)\Phi(-\tbf)]\!\\
    &+\! \Exp[\Phi(\hbf)\Phi(\gbf)\Phi(-\hbf)\Phi(-\gbf)] 
    - \Exp[\Phi(\tbf)\Phi(\hbf)\Phi(\gbf)]\!\\
    &- \!\!\Exp[\Phi(-\tbf)\Phi(-\hbf)\Phi(-\gbf)].
\end{aligned}
\end{align}
All four elements in the right hand side of \eqref{eq:chareq} are equal since each element is in the form of \eqref{eq:ch2} and is function of $2\tbf \herm \tbf$. Therefore,
\beq
    \Phi(\tbf) \stackrel{a.s}{=} \Phi(\hbf)\Phi(\gbf)
\eeq 
Based on Lemma \ref{lem:func}, the solution to this functional equation is 
\beq
    \Phi(\tbf \herm) = e^{-\frac{1}{4} \tbf \herm \Pbf \tbf}.
\eeq 
Using \eqref{eq:ch1},
\begin{align}
\begin{aligned}
    \Exp& \left[ e^ { i \Re\left ( \tr \left( \Tbf\herm \Ubf\right) \right) } \Big|\Pbf \right] \\
    &= \Exp \left[\prod_{n=1}^N e^{-\frac{1}{4} \tbf_n \herm \Pbf \tbf_n} \Big|\Pbf \right] = \prod_{n=1}^N  e^{-\frac{1}{4} \tbf_n \herm \Pbf \tbf_n}
\end{aligned}
\end{align}

Therefore, conditioned on $\Pbf$, $\ubf_{n:}$ are i.i.d.  $\sim \mathcal{CN}(\zero, \Pbf)$. We can find $\Pbf$ as follows

\begin{align}
    \begin{aligned}
    \Pbf &\stackrel{(a)}{\stackrel{a.s}{=}} \frac{1}{N} \sum_{n=1}^N \ubf_{n:}  \ubf_{n:}\herm  = \frac{1}{N}  \Ubf \tran \Ubf\str \\
    &\stackrel{(b)}{=} \frac{1}{N} \Xbf \tran  \Xbf\str = \frac{1}{N} \sum_{n=1}^N \xbf_{n:}\xbf_{n:}\herm  \stackrel{(c)}{\stackrel{a.s}{=}} \Exp[XX\herm],
    \end{aligned}
\end{align}
where $(a)$ follows from strong law of large number, $(b)$ is due to equality $\Ubf = \Vbf \Xbf$, and $(c)$ comes from empirical convergence of rows of $\Xbf$ to random vector $X\tran$.

\section{Proof of Proposition \ref{prop:lin}}
\label{pr:emp_conv}
For the proof of this proposition, we use an approach similar to the proof of \cite[Theorem 4]{rangan2019vector}. Consider the following steps:
\begin{subequations}
\begin{align}
    \Ubf &= \Vbf \herm \Zbf,\\
    \Abf &= \bkt{\frac{\partial \Phibf(\Ubf,\Xibf)}{\partial U}}, \\
    \Bbf &= \Phibf(\Ubf,\Xibf)- \Ubf \Abf\tran , \\
    \Theta&= \Vbf \Bbf,
\end{align}
\end{subequations}
where $\bkt{\frac{\partial \Phibf(\Ubf,\Xibf)}{\partial U}}$ is the Jacobian matrix of $\Phibf(\cdot)$ evaluated for each row pair of $(\ubf_{n:},\xibf_{n:})$ and then averaged over $n$.
The transpose of rows of matrix $\Xibf$ (i.e., $\xibf_{n:}$) are i.i.d. $\mathcal{CN}(0,\Ibf_{N_r})$. Therefore,
\begin{align}
        \lim_{n \rightarrow \infty} \{\xibf_{n:}\} \stackrel{PL(2)}{=} \xi,\quad \xi \sim \mathcal{CN}(\zero, I_{N_r})
\end{align}
 From Prop.~\ref{prop:matdist}, we know that the rows of matrix $\Ubf$ converge empirically to random vector $U \sim \mathcal{CN}(\zero, \Pbf)$. Additionally, since $\Vbf$ and $\Zbf$ are independent of $\Xibf$ and the function $\Phibf(\cdot)$ is Lipschitz continuous, the sequence $(\xibf_{n:}, \Ubf_{n:}, \bbf_{n:})$ converges empirically to $(\Xi, U, B)$. Following same steps of \cite[Theorem 4]{rangan2019vector}, we have 
\begin{align}
\Theta &= \Theta^{det} + \Theta^{rand},
\end{align}
where $\Theta^{det} = \zero$ and the rows of the matrix $\Theta^{rand}$, converge empirically to $\Theta \sim \mathcal{CN}(\zero, \Tbf)$ where $\Tbf = \Exp\left[(\Phi(U,\Xi)- \Abf U ) (\Phi(U,\Xi)- \Abf U)\herm \right]$.

\section{Proof of Theorem \ref{thm:spec_fd}} \label{sec:spec_proof}
The theorem is a direct application of the linear model in Prop.~\ref{prop:lin}.
To use the proposition, first observe that, due to \eqref{eq:delm_limit} and the Gaussian distribution on $\zbf_{n:}$ in \eqref{eq:zmix},
we have that the elements of sub-band selection $\lbf$ and the rows of frequency-domain inputs $\Zbf$ converge empirically as,
\beq \label{eq:alim}
    \lim_{n \rightarrow \infty} \{(\zbf_{n:},\lbf_n)\} \stackrel{PL(2)}{=} (Z,\ell),
\eeq
where $\ell \in [M]$ 
is a discrete random variable with $\Pr(\ell=m)=\delta_m$ and $Z$ is 
the conditional complex Gaussian, 
\[
    Z \sim C{\mathcal N}(\zero,\Pbf_m) \mbox{ when } \ell=m.
\]
In particular, the covariance matrix of $Z$ is,
\beq \label{eq:expz_sm}
    \Exp[ZZ\herm] = \sum_{m=1}^M \delta_m \Pbf_m =: \Pbf.
\eeq
Now, the frequency domain components of the transmitted vectors $\xbf_{:j}, j \in [N_t]$ are given by,
\[
    \Rbf = \Vbf\Xbf = \Vbf\Qbf(\Vbf\herm \Zbf).
\]
Proposition~\ref{prop:lin} then shows that the components of $(\rbf,\zbf,\lbf)$ converge empirically as,
\[
    \lim_{N \rightarrow \infty} \{(\rbf_{n:},\zbf_{n:},\ell_n) \} \stackrel{PL(2)}{=} (R,Z,\ell), 
\]
and
\[
    R = \Abf_{\rm tx}Z + W_{\rm tx}, \quad W_{\rm tx} \sim C{\mathcal N}(\zero,\Tbf_{\rm tx}),
\]
where $W_{\rm tx}$ is independent of $Z$.  The sub-band energies,
\begin{align}
    \Sbf_m &:= \lim_{N \rightarrow \infty}\frac{1}{N} \sum_{n=1}^{N} \rbf_{n:} \rbf_{n:}\herm \indic{\ell_n = m} = \Exp\left[ R R\herm \indic{\ell = m} \right] \nonumber \\
    &= \Exp\left[ (\Abf_{\rm tx}Z + W_{\rm tx})(\Abf_{\rm tx}Z + W_{\rm tx})\herm |\ell=m\right]\Pr(\ell=m) \nonumber \\
    &= \left[ \Abf_{\rm tx}\Pbf_m\Abf_{\rm tx} \herm + \Tbf_{\rm tx} \right] \delta_m \nonumber.
\end{align}
This proves \eqref{eq:sm_fd}.
To prove \eqref{eq:stot_fd},
\begin{align}
\begin{aligned}
    \Sbf &= \sum_{m=1}^M \Sbf_m = \sum_{m=1}^M \left[  \Abf_{\rm tx}\Pbf_m\Abf_{\rm tx} \herm + \Tbf_{\rm tx} \right] \delta_m \\
    &= \Abf_{\rm tx}\Pbf\Abf_{\rm tx} \herm + \Tbf_{\rm tx},
    \end{aligned}
\end{align}
where the last step used \eqref{eq:expz_sm} and the fact that $\sum_m \delta_m = 1$.

\section{The Linear Rate Region} \label{sec:linear_feas}
$(a) \Rightarrow (b)$:
Suppose there exists $\Pbf_m $ as in (a) and let $\nu_m$
be given by \eqref{eq:nu_fd}.
Since $\Pbf = \sum_m \delta_m \Pbf_m$, we have $\sum_m \nu_m =1$.  Also, since $\diag(\Pbf_m) \geq 0$, we have $\nu_m$ in \eqref{eq:nu_fd} satisfies the lower bound \eqref{eq:nu_fd_feas}.

$(b) \Rightarrow (a)$: Conversely, suppose we are given $\nubf$ satisfying
 \eqref{eq:nu_fd_feas} with $\sum_m \nu_m = 1$.
Set,
\begin{align}
 \label{eq:Pm_nu}
 \begin{aligned}
    \Pbf_m =\frac{\Pbf}{\tr \left( \Abf_{\rm tx}\Pbf\Abf_{\rm tx} \herm \right)}\left[\frac{\nu_m}{\delta_m}\tr \left( \Abf_{\rm tx}\Pbf\Abf_{\rm tx} \herm + \Tbf_{\rm tx} \right) - \tr(\Tbf_{\rm tx})\right].
 \end{aligned}
\end{align}
Therefore, $\nu_m$ satisfies \eqref{eq:nu_fd}.
Since $\nu_m$ satisfies \eqref{eq:nu_fd_feas}, 
$P_m$ in \eqref{eq:Pm_nu} satisfies $P_m \geq 0$. Also,
\begin{align*}
    \sum_m &\delta_m P_m = \frac{\Pbf}{\tr \left( \Abf_{\rm tx}\Pbf\Abf_{\rm tx} \herm \right)}\left[ \sum_m \nu_m \tr \left( \Abf_{\rm tx}\Pbf\Abf_{\rm tx} \herm + \Tbf_{\rm tx} \right)\right.\\
    &- \left.\sum_m \delta_m \tr(\Tbf_{\rm tx}) \right] \\
    &= \frac{\Pbf}{\tr \left( \Abf_{\rm tx}\Pbf\Abf_{\rm tx} \herm \right)} \left[  \tr \left( \Abf_{\rm tx}\Pbf\Abf_{\rm tx} \herm+ \Tbf_{\rm tx}- \Tbf_{\rm tx}  \right) \right] = \Pbf,
\end{align*}
where we have used the fact that $\sum_m \nu_m = \sum_m \delta_m = 1$.

\section{Proof of Theorem \ref{thm:rate_lin}} \label{sec:rate_lin_proof}

We need two basic mutual information lemmas.
For $m \in [M]$, let $\Zbf[m]$ and $\Zbfhat[m]$ denote the sub-matrices of $\Zbf$ and $\Zbfhat$
with components in sub-band $m$.  

\begin{lemma} \label{lem:chain}  The mutual information is bounded below by,
\beq \label{eq:mi_sum}
    I(\Zbf;\Zbfhat) \geq \sum_{m=1}^M I(\Zbf[m];\Zbfhat[m]).
\eeq
\end{lemma}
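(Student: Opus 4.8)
The plan is to prove Lemma~\ref{lem:chain} by exploiting the independence structure of the frequency-domain inputs across sub-bands. First I would recall that by \eqref{eq:zmix} the rows $\zbf_{n:}$ are independent, and the sub-band assignment is deterministic (via the fixed sub-band selection vector $\lbf$), so the sub-matrices $\Zbf[1],\ldots,\Zbf[M]$ are mutually independent. This gives $H(\Zbf) = \sum_m H(\Zbf[m])$, which is the key entropic fact that makes the decomposition work.

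Next I would write $I(\Zbf;\Zbfhat) = H(\Zbf) - H(\Zbf \mid \Zbfhat)$ and bound the conditional entropy from above. The natural step is $H(\Zbf \mid \Zbfhat) \leq \sum_m H(\Zbf[m] \mid \Zbfhat)$ by subadditivity of conditional entropy (the chain rule plus ``conditioning reduces entropy''), and then further $H(\Zbf[m] \mid \Zbfhat) \leq H(\Zbf[m] \mid \Zbfhat[m])$ since $\Zbfhat[m]$ is a sub-collection of the coordinates of $\Zbfhat$ and additional conditioning only decreases entropy. Combining,
\[
    I(\Zbf;\Zbfhat) = H(\Zbf) - H(\Zbf\mid\Zbfhat) \geq \sum_m H(\Zbf[m]) - \sum_m H(\Zbf[m]\mid\Zbfhat[m]) = \sum_m I(\Zbf[m];\Zbfhat[m]),
\]
which is exactly \eqref{eq:mi_sum}. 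For rigor with continuous (Gaussian) vectors I would phrase this via differential entropy, or alternatively avoid entropies entirely and argue directly: $I(\Zbf;\Zbfhat) = I(\Zbf;\Zbfhat) \geq \sum_m I(\Zbf[m];\Zbfhat \mid \Zbf[1],\ldots,\Zbf[m-1])$ by the chain rule for mutual information, then use independence of the $\Zbf[m]$ to drop the conditioning on the earlier blocks on the ``input side'' and the data-processing/monotonicity fact $I(\Zbf[m];\Zbfhat) \geq I(\Zbf[m];\Zbfhat[m])$.

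The main obstacle I anticipate is making the entropy manipulations fully rigorous in the continuous setting and, more subtly, justifying that we may restrict attention to $\Zbfhat[m]$ on the output side — this is a genuine inequality, not an equality, because the cross-sub-band components of $\Zbfhat$ can carry information about $\Zbf[m]$ through the nonlinear quantization (indeed \eqref{eq:nu_fd_feas} shows quantization noise spreads across sub-bands). The cleanest route is the mutual-information chain rule together with $I(A;B,C)\geq I(A;B)$ and the independence of the $\Zbf[m]$'s, so that $I(\Zbf[m];\Zbfhat\mid \Zbf[<m]) = I(\Zbf[m];\Zbfhat, \Zbf[<m]) - 0 \geq I(\Zbf[m];\Zbfhat) \geq I(\Zbf[m];\Zbfhat[m])$; summing over $m$ telescopes the left side to $I(\Zbf;\Zbfhat)$.
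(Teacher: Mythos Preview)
Your proposal is correct and is the standard argument for this kind of sub-band decomposition: independence of the $\Zbf[m]$ (from \eqref{eq:zmix} and the deterministic sub-band selection) gives $H(\Zbf)=\sum_m H(\Zbf[m])$, while the chain rule plus ``conditioning reduces entropy'' yields $H(\Zbf\mid\Zbfhat)\leq\sum_m H(\Zbf[m]\mid\Zbfhat[m])$; the mutual-information formulation you give at the end is an equally clean alternative. The paper itself does not spell out a proof but simply defers to \cite[Lemma~1]{dutta2020capacity}, whose argument is exactly this one in the SISO case, so your approach matches the intended proof.
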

The proof follows using the same steps as of the proof of \cite[Lemma 1]{dutta2020capacity}.
\begin{lemma} \label{lem:mi_lower}  Suppose that $\Zbf \in \C^{N_t\times d}$ is a random matrix 
with i.i.d. columns $Z \sim  C{\mathcal N}(\mubf_{\zbf},
\Pbf)$.  Let $\Ybf$ be another random matrix and define,
\begin{subequations}
\begin{gather}
    \Pbf_C := \Pbf^{-1}\frac{1}{d}\sum_{ i=1}^d \Cbf_i, \\ 
    \Cbf_i = \Exp\left[\overline{\zbf}_{:i} \overline{\ybf}_{:i}\herm \right] \Exp\left[\overline{\ybf}_{:i} \overline{\ybf}_{:i}\herm \right]^{-1} \Exp\left[\overline{\ybf}_{:i} \overline{\zbf}_{:i}\herm \right],
\end{gather}
\end{subequations}
where $\overline{\Zbf} = \Zbf-\Exp[\Zbf]$ and $\overline{\Ybf} = \Ybf-\Exp[\Ybf]$. Then, the mutual information between $\Zbf$ and $\Ybf$ is bounded below by,
\[
    I(\Zbf;\Ybf) \geq  -d \ln(|\Ibf- \Pbf_C|).
\]
\end{lemma}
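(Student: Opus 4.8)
The goal is a lower bound on $I(\Zbf;\Ybf)$ for a complex‑Gaussian‑column matrix $\Zbf$ and an arbitrary $\Ybf$. The natural strategy is to exploit the variational characterization of mutual information through an \emph{MMSE / Gaussian surrogate} argument: for jointly distributed $(\Zbf,\Ybf)$, $I(\Zbf;\Ybf) = h(\Zbf) - h(\Zbf\mid\Ybf)$, and since the prior on each column $\zbf_{:i}$ is Gaussian with covariance $\Pbf$, the marginal differential entropy is $h(\Zbf) = \sum_{i=1}^d h(\zbf_{:i}) = d\log\big((\pi e)^{N_t}|\Pbf|\big)$ (the columns are i.i.d.\ so there is no cross term). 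For the conditional entropy I would upper bound it by the entropy of the best \emph{linear} estimator's error: $h(\Zbf\mid\Ybf) \le \sum_{i=1}^d h(\zbf_{:i}\mid \ybf_{:i}) \le \sum_{i=1}^d h\big(\zbf_{:i} - \Exp[\overline{\zbf}_{:i}\overline{\ybf}_{:i}\herm]\,\Exp[\overline{\ybf}_{:i}\overline{\ybf}_{:i}\herm]^{-1}\ybf_{:i}\big)$, using first that conditioning reduces entropy across columns, and then that for fixed second moments the Gaussian maximizes entropy and the LMMSE estimator is the mean‑squared‑error optimal \emph{linear} map.

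**Key steps in order.** First I would center everything: replace $\Zbf,\Ybf$ by $\overline{\Zbf},\overline{\Ybf}$ (translation does not change mutual information or differential entropy). Second, apply the chain rule and ``conditioning reduces entropy'' to get $I(\Zbf;\Ybf) \ge \sum_i \big[h(\overline{\zbf}_{:i}) - h(\overline{\zbf}_{:i}\mid \overline{\ybf}_{:i})\big] = \sum_i I(\overline{\zbf}_{:i};\overline{\ybf}_{:i})$ — actually the cleaner route is to bound $h(\Zbf\mid\Ybf)$ directly. Third, for each $i$ form the LMMSE error $\ebf_i := \overline{\zbf}_{:i} - \Exp[\overline{\zbf}_{:i}\overline{\ybf}_{:i}\herm]\Exp[\overline{\ybf}_{:i}\overline{\ybf}_{:i}\herm]^{-1}\overline{\ybf}_{:i}$, whose covariance is the Schur complement $\Pbf - \Cbf_i$ with $\Cbf_i$ as defined in the statement. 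Then $h(\overline{\zbf}_{:i}\mid\overline{\ybf}_{:i}) \le h(\ebf_i) \le \log\big((\pi e)^{N_t}|\Pbf - \Cbf_i|\big)$, the last inequality again by Gaussian entropy‑maximization at fixed covariance. Fourth, assemble:
\[
 I(\Zbf;\Ybf) \ge \sum_{i=1}^d \log\frac{|\Pbf|}{|\Pbf-\Cbf_i|} = -\sum_{i=1}^d \log|\Ibf - \Pbf^{-1}\Cbf_i|.
\]
Fifth, use concavity of $\log|\cdot|$ on the positive‑definite cone (equivalently, Jensen's inequality over the uniform average on $i$) to replace the sum of logdets by $d$ times the logdet of the average, giving $I(\Zbf;\Ybf) \ge -d\log\big|\Ibf - \tfrac1d\sum_i \Pbf^{-1}\Cbf_i\big| = -d\ln|\Ibf - \Pbf_C|$, which is the claim.

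**Anticipated obstacle.** The main subtlety is the passage from $\sum_i \log|\Pbf-\Cbf_i|^{-1}|\Pbf|$ to the single‑matrix form via Jensen: one must verify that $A \mapsto \log|\Ibf - A|$ (or $A\mapsto -\log|\Ibf-A|$) is concave (resp.\ convex) on the relevant set of matrices, namely $A_i = \Pbf^{-1}\Cbf_i$ with $\zero \preceq \Cbf_i \preceq \Pbf$ so that $\zero \preceq \Ibf - A_i \preceq \Ibf$ after a similarity transform by $\Pbf^{1/2}$; concavity of $\log\det$ on positive‑definite matrices then does the job, but the non‑symmetric factor $\Pbf^{-1}\Cbf_i$ needs the standard symmetrization $\log|\Ibf - \Pbf^{-1}\Cbf_i| = \log|\Ibf - \Pbf^{-1/2}\Cbf_i\Pbf^{-1/2}|$. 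A secondary technical point is justifying $h(\overline\zbf_{:i}\mid\overline\ybf_{:i}) \le h(\ebf_i)$ cleanly even when $\Exp[\overline\ybf_{:i}\overline\ybf_{:i}\herm]$ is only positive semidefinite, which can be handled by restricting to its range or by a limiting argument; and one should note the bound is vacuous/still valid when $\Cbf_i$ fails to be well defined, matching the remark that $\Pbf-\Cbf_i$ is the conditional covariance. The rest — centering, chain rule, Gaussian maximality — is routine.
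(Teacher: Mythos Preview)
Your proposal is correct and rests on the same core ingredients as the paper's proof: the decomposition $I(\Zbf;\Ybf)=h(\Zbf)-h(\Zbf\mid\Ybf)$, the Gaussian entropy of the i.i.d.\ columns, an LMMSE--based Gaussian upper bound on the conditional entropy yielding the per--column error covariances $\Pbf-\Cbf_i$, and a concavity/averaging step to collapse the sum over $i$ into the single matrix $\Ibf-\Pbf_C$.

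The organization differs, however. The paper vectorizes $\Zbf$ into one $dN_t$--dimensional Gaussian, bounds $h(\Zbf\mid\Ybf)$ by the log--determinant of the full LMMSE error covariance, and then reaches the per--column form through a chain of matrix inequalities: Fischer's inequality to pass to diagonal blocks, the AM--GM inequality on $|\Cbf_{ei}|^{1/N_t}$, and finally an extended Hadamard/Minkowski--type inequality $\tfrac1d\sum_i|\Cbf_{ei}|^{1/N_t}\le|\tfrac1d\sum_i\Cbf_{ei}|^{1/N_t}$. Your route bypasses all of this: you reduce to columns up front via the chain rule plus ``conditioning reduces entropy'' (which is exactly the entropy--theoretic content of Fischer's inequality), and then replace the AM--GM\,$+$\,Minkowski combination by a single invocation of the concavity of $\log\det$ on the positive--definite cone, after the symmetrization $|\Ibf-\Pbf^{-1}\Cbf_i|=|\Ibf-\Pbf^{-1/2}\Cbf_i\Pbf^{-1/2}|$. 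The two averaging arguments are in fact equivalent (the paper's two inequalities compose to give concavity of $\log\det$), but your packaging is shorter and uses only standard facts.
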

\begin{proof}
The mutual information is,
\beq \label{eq:Igauss}
    I(\Zbf;\Ybf) = H(\Zbf) - H(\Zbf|\Ybf).
\eeq
Since $\tilde{\zbf} ={\rm vec}(\Zbf) = [\zbf_{:1}\tran,\zbf_{:2}\tran,...,\zbf_{:d}\tran]\tran$ is distributed as $C{\mathcal N}({\rm vec}(\mubf_{\zbf}),\Ibf_d \otimes \Pbf)$, 
\beq \label{eq:Hgauss}
    H(\Zbf) = d\ln( |2\pi e \Pbf|).
\eeq
Define $\tilde{\ybf} = {\rm vec}(\Ybf)=[\ybf_{:1}\tran,\ybf_{:2}\tran,...,\ybf_{:d}\tran]\tran$. Now, given $\Ybf$, we can get a estimate of $\tilde{\zbf}$, $\hat{\tilde{\zbf}}$ which leads to the error covariance matrix,
\beq
    \Sigma_e := \Exp\left[ (\tilde{\zbf} - \hat{\tilde{\zbf}}(\tilde{\ybf}))(\tilde{\zbf} - \hat{\tilde{\zbf}}(\tilde{\ybf}))\herm\right],
\eeq
where $\hat{\tilde{\zbf}}(\tilde{\ybf})$ is the estimate of $\tilde{\zbf}$ given $\tilde{\ybf}$.
So, the conditional entropy $H(\Zbf|\Ybf)$ is bounded below by the entropy of a Gaussian random vector with the covariance $\Sigma_e$. Consider linear minimum mean square estimation and we have
\begin{align}
    &H(\Zbf|\Ybf) \leq \ln((2\pi e)^{d N_t}|\Sigma_e|)\\
    &\stackrel{(a)}{\leq}  \ln((2\pi e)^{d N_t} \prod_{i = 1}^d|\Cbf_{ei}|) \\
    &=  d \ln((2\pi e)^{N_t})  + N_t\ln( \prod_{i = 1}^d|\Cbf_{ei}|^{\frac{1}{N_t}}) \\
    &\stackrel{(b)}{\leq}  d \ln((2\pi e)^{N_t})  + d N_t\ln(\frac{1}{d}\sum_{i = 1}^d\left|\Cbf_{ei}\right|^{\frac{1}{N_t}})\\
    &\stackrel{(c)}{\leq} d \ln((2\pi e)^{N_t})  + d N_t\ln(\frac{1}{d}\left|\sum_{i = 1}^d\Cbf_{ei}\right|^{\frac{1}{N_t}})\\
    &= d \ln((2\pi e)^{N_t}\left|\frac{1}{d}\sum_{i = 1}^d\Cbf_{ei}\right|)
\end{align}
where $\Cbf_{ei} =$ $\Exp\left[\bar{\zbf}_{:i} \overline{\zbf}_{:i}\herm\right]$ $- \Exp\left[\overline{\zbf}_{:i} \overline{\ybf}_{:i}\herm \right]$ $\Exp\left[\overline{\ybf}_{:i} \overline{\ybf}_{:i}\herm \right]^{-1} \Exp\left[\overline{\ybf}_{:i} \overline{\zbf}_{:i}\herm \right]$, (a) uses Fischer's inequality \cite{fischer1908hadamardschen}, (b) uses Jensen's inequality, and (c) uses an extended version of the Hadamard's inequality \cite[Prop.~2.7]{li2020extensions}. 
Therefore, we have
\beq \label{eq:Hzy1}
    H(\Zbf|\Ybf) \leq d\ln(\left|2\pi e \left(\Pbf-\frac{1}{d}\sum_{ i=1}^d \Cbf_i\right)\right|).
\eeq
Substituting \eqref{eq:Hgauss} and \eqref{eq:Hzy1} into \eqref{eq:Igauss}, we get
\begin{align}
    &I(\Zbf;\Ybf) \geq \nonumber\\
    &d\ln(|2\pi e \Pbf|) - d \ln(\left|2\pi e \left(\Pbf-\frac{1}{d}\sum_{ i=1}^d \Cbf_i\right)\right|)\\
    &= -d\ln(|\Ibf - \Pbf_C|),
\end{align}
which concludes the proof.
\end{proof}

To prove the theorem, we use these lemmas as follows.  In each sub-band $m$, the rows of $\Zbf[m]$
are i.i.d.\ complex Gaussians with zero mean and covariance matrix $\Pbf_m$.
So, by Lemma~\ref{lem:mi_lower}, 
\beq \label{eq:Im1}
    I(\Zbf[m]; \Zbfhat[m]) \geq -N_m \ln(|\Ibf - \Pbf_{Cm}|),
\eeq
where $N_m$ is the number of coefficients in sub-band $m$ and
$\Pbf_{Cm}$ is the correlation coefficient matrix,
\begin{align}
\label{eq:rhom}
        \Pbf_{Cm} := 
            &\Pbf_m^{-1} \frac{1}{N_m} \sum_{i=1}^{N_m}  \Exp\left[\zbf[m]_{:i} \zbfhat[m]_{:i}\herm \right]\times\notag \\
        &\Exp\left[\zbfhat[m]_{:i} \zbfhat[m]_{:i}\herm \right]^{-1} \Exp\left[\zbfhat[m]_{:i} \zbf[m]_{:i}\herm \right].
\end{align} 
Now, \eqref{eq:delm_limit} shows that $N_m/N \rightarrow \delta_m$.
So, if we divide \eqref{eq:Im1} by $N$ and take the limit we get,
\beq \label{eq:Im2}
    \liminf_{N \rightarrow \infty} \frac{1}{N} I(\Zbf[m]; \Zbfhat[m]) \geq -\delta_m  \ln(|\Ibf - \overline{\Pbf}_{Cm}|).
\eeq
where $\overline{\Pbf}_{Cm}$ is the limiting correlation,
\beq \label{eq:rho_lim}
    \overline{\Pbf}_{Cm} := \lim_{N \rightarrow \infty} {\Pbf}_{Cm}
\eeq

To compute the limit in \eqref{eq:rho_lim}, we use 
a similar calculation to the proof of Theorem \ref{thm:spec_fd}.
Specifically, the received symbols are given by,
\beq
    \Zbfhat = \Vbf\Gbf(\Vbf\herm \Zbf,\Wbf).
\eeq
Proposition~\ref{prop:lin} then shows that the rows of $(\Zbfhat,\Zbf,\lbf)$ converge empirically as,
\beq
    \lim_{N \rightarrow \infty} \{(\widehat{Z}_{n:},Z_{n:},\ell_n) \} \stackrel{PL(2)}{=} (\widehat{Z},Z,L), 
\eeq
and
\beq
    \hat{Z} = \Abf_{\rm rx} Z + W_{\rm rx}, \quad W_{\rm rx} \sim {\mathcal CN}(\zero,\Tbf_{\rm rx}),
\eeq
where $W_{\rm rx}$ is independent of $Z$.  Now, we have that,
\begin{align}
    \label{eq:zzhat_lim1}
    \MoveEqLeft \lim_{N \rightarrow \infty} \Pbf_{Cm}= \begin{aligned}
        &\lim_{N_m \rightarrow \infty} \Pbf_m^{-1} \frac{1}{N_m} \sum_{i=1}^{N_m}  \Exp\left[\zbf[m]_{:i} \zbfhat[m]_{:i}\herm \right]\times\\
        &\Exp\left[\zbfhat[m]_{:i} \zbfhat[m]_{:i}\herm\right]^{-1} \Exp\left[\zbfhat[m]_{:i} \zbf[m]_{:i}\herm \right]
    \end{aligned}\\
    & = \begin{aligned}
        &\Pbf_m^{-1}\Exp\left[Z \hat{Z}\herm |L=m \right] \Exp\left[\hat{Z} \hat{Z} \herm|L=m  \right]^{-1}\times\\ 
        &\Exp\left[\hat{Z} Z |L=m \right] 
    \end{aligned}\\
    &= \left[\Abf_{\rm rx} \herm (\Abf_{\rm rx} \Pbf_m \Abf_{\rm rx} \herm  + \Tbf_{\rm rx})^{-1} \Abf_{\rm rx}\right] \Pbf_{m} \herm
\end{align}
Hence,
\begin{align}
    &(\Ibf - \overline{\Pbf}_C^{(m)})^{-1}\nonumber \\
    &\stackrel{(a)}{=}\Ibf + \Abf_{\rm rx} \herm \left( \Abf_{\rm rx} (\Pbf_m - \Pbf_m \herm) \Abf_{\rm rx} \herm + \Tbf_{\rm rx}\right)^{-1}\!\!\! \Abf_{\rm rx} \Pbf_{m} \herm \\
    &\stackrel{(b)}{=}  \Ibf + \Abf_{\rm rx} \herm \Tbf_{\rm rx}^{-1} \Abf_{\rm rx} \Pbf_{m}, 
\end{align}
where $(a)$ follows from Woodbury inversion lemma and $(b)$ uses the fact that $\Pbf_m$ is a covariance matrix. As a result, from \eqref{eq:Im2}, we obtain
\beq \label{eq:Im3}
\begin{aligned}
    \liminf_{N \rightarrow \infty} &\frac{1}{N} I(\Zbf[m]; \Zbfhat[m]) \geq \\
    &\delta_m  \log\left(| \Ibf + \Abf_{\rm rx} \herm \Tbf_{\rm rx}^{-1} \Abf_{\rm rx} \Pbf_{m} | \right).
\end{aligned}
\eeq
Substituting \eqref{eq:Im3} into the sum \eqref{eq:mi_sum} and using
Sylvester's determinant identity proves \eqref{eq:rate_lin}.

\end{document}